\theoremstyle{definition}
\theoremstyle{theorem}
\newtheorem{lemma}{Lemma}
\newtheorem*{lemma*}{Lemma}
\newtheorem{proposition}{Proposition}
\newtheorem*{proposition*}{Proposition}
\newtheorem*{conjecture*}{Conjecture}
\newtheorem*{corollary*}{Corollary}
\theoremstyle{remark}
\newcommand{\RR}{\mathbb{R}}
\newcommand{\U}{\mathscr{U}}
\newcommand{\B}{\mathscr{B}}
\newcommand{\C}{\mathscr{C}}
\renewcommand{\H}{\mathscr{H}}
\renewcommand{\P}{\mathscr{P}}
\renewcommand{\S}{\mathscr{S}}
\newcommand{\M}{\mathscr{M}}
\newcommand{\II}{\mathbb{I}}
\newcommand{\ket}[1]{\left| #1 \right\rangle}
\newcommand{\bra}[1]{\left\langle #1 \right|}
\newcommand{\ketbra}[2]{\ket{#1}\bra{#2}}
\renewcommand{\vec}[1]{\pmb{#1}}
\newcommand{\n}{\pmb{n}}
\newcommand{\abs}[1]{\left| #1\right|}
\renewcommand{\d}{\operatorname{d}}
\newcommand{\Tr}{\operatorname{Tr}}
\newcommand{\cell}{\operatorname{box}}
\newcommand{\cone}{\operatorname{cone}}
\begin{document}
\title{Non-separability and steerability of two-qubit states from the geometry of steering~outcomes}
\author{H. Chau Nguyen}
\email{chau@pks.mpg.de}
\affiliation{Max-Planck-Institut f\"ur Physik komplexer Systeme, \\ N\"othnitzer Stra{\ss}e 38, D-01187 Dresden, Germany}
\author{Thanh Vu}
\email{tvu@unl.edu}
\affiliation{Department of Mathematics, University of Nebraska-Lincoln, \\Lincoln, NE 68588, USA}
\begin{abstract}
When two qubits A and B are in an appropriate state, Alice can remotely steer Bob's system B into different ensembles by making different measurements on A. This famous phenomenon is known as quantum steering, or Einstein-Podolsky-Rosen steering. Importantly, quantum steering establishes the correspondence not only between a measurement on A (made by Alice) and an ensemble of B (owned by Bob) but also between each of Alice's measurement outcomes and an unnormalized conditional state of Bob's system. The unnormalized conditional states of B corresponding to all possible measurement outcomes of Alice are called Alice's steering outcomes. We show that, surprisingly, the $4$-dimensional geometry of Alice's steering outcomes completely determines both the non-separability of the two-qubit state and its steerability from her side. Consequently, the problem of classifying two-qubit states into non-separable and steerable classes is equivalent to geometrically classifying certain $4$-dimensional skewed double-cones. 
\end{abstract}

\maketitle

%===================================================================================
\section*{Introduction}
Quantum steering, or Einstein-Podolsky-Rosen (EPR) steering, arose from the first discussion on the non-local nature of quantum mechanics~\cite{Einstein1935a,Schrodinger1935a}. Subsequent attempts to clarify the notion of non-locality has led to the discovery of different classes of quantum non-locality.  Although Bell non-locality~\cite{Bell1964a} and non-separability~\cite{Werner1989a} were discussed rather early as two distinct classes of quantum non-locality, only recently was steerability defined~\cite{Wiseman2007a,Jones2007b}. 

One of the key concepts to define quantum steerability is the \emph{assemblage}. An assemblage is a set of ensembles that give rise to the same quantum state. Now consider the case where Alice and Bob share a bipartite system AB. Although the no-signalling theorem prevents Alice from affecting the reduced state of Bob's system B from distance~\cite{Popescu1994a,Pawlowski2009a}, she can remotely steer it into different ensembles by performing different measurements on her own system A~\cite{Einstein1935a,Schrodinger1935a}. These different ensembles of B form a certain assemblage--the steering assemblage. However, as Wiseman, Jones, and Doherty~\cite{Wiseman2007a} pointed out, if this steering assemblage is too restrictive in some sense, Alice may never convince Bob that she is actually steering his system remotely, in which case the state is unsteerable from her side. %; see below for the precise definition.

Shortly after steerability was defined, sufficient conditions for a state to be steerable were developed in terms of steering inequalities~\cite{Cavalcanti2009a,Saunders2011a,Zukowski2015a,Marciniak2015a,Kogia2015a,Zhu2015a}. Characterizing steerability has quickly gone beyond inequalities and multiple relationships to other concepts of quantum physics have been discovered. Steerability was shown to be equivalent to joint-measurability in~\cite{Uola2014a,Quintino2014a}.
%Chen~\emph{et al.} ~\cite{Chen2013a} provided a so-called `all-versus-nothing proof' for steerability. 
%Uola~\emph{et al.}~\cite{Uola2014a} and Quintino~\emph{et al.}~\cite{Quintino2014a} related steerability to the concept of joint-measurability. 
Quantum steering in time was discussed by Chen \emph{et al.}~\cite{Chen2014a} and a close relationship between steerability and quantum-subchannel discrimination has been established~\cite{Piani2015a}. When steerability was realized as a resource for quantum information tasks, quantifying steerability naturally appeared as an important problem~\cite{Skrzypczyk2014a}. %With the discovery of steerability, the implication of quantum steering in the foundation of quantum mechanics has also regained revival interest~\cite{Cavalcanti2009a,Sainz2015a}.

The concept of steerability is based on the correspondence between a measurement of Alice and an ensemble of Bob's system. However, quantum steering establishes a more elementary and much simpler correspondence: each of Alice's  measurement outcomes corresponds to a conditional state of Bob's system. When Alice gets a particular measurement outcome, the unnormalized conditional state of B is determined regardless of which measurement that particular outcome belongs to. The unnormalized conditional states of Bob's system corresponding to all possible measurement outcomes of Alice are referred to as Alice's steering outcomes. 
%Surprisingly, shifting from analyzing the steering assemblage to analyzing the steering outcomes greatly simplifies various problems. In particular, 
We show that, surprisingly, the $4$-dimensional (4D) geometry of Alice's steering outcomes completely determines both the non-separability of the two-qubit state~\cite{Werner1989a} and its steerability from her side~\cite{Wiseman2007a}; see Propositions~\ref{pros:separability} and~\ref{pros:steerability}. Thus, the problem of classifying two-qubit states into non-separable and steerable classes is conceptually simplified to classifying certain 4D skewed double-cones.

%In subsequent work~\cite{Nguyen2016b}, using this idea, we propose a necessary and sufficient condition for steerability and prove the conjecture~\cite{Jevtic2015a} on the steerability of the so-called T-states of two qubits. 
Although most of the definitions and many statements in this paper can be naturally generalized to higher dimensional systems, there are certain aspects of two-qubit systems that make the statements particularly simple and transparent. We thus restrict our analysis here to two-qubit systems and wish to discuss higher-dimensional ones elsewhere. 

%===================================================================================
\section*{EPR maps and steering outcomes}
Let us consider a qubit described by a $2$-dimensional (2D) Hilbert space. The hermitian operators acting on $\H$ with the Hilbert-Schmidt inner product $(A,B) \to \Tr (A^{\dagger}B)$ form a Euclidean space, denoted by $B^{H} (\H)$~\cite{Kadison1990a,Nielsen2010a}. Fix an orthogonal basis of $\H$. Letting $\sigma_0= \II$ be the identity matrix and $\{\sigma_i\}_{i=1}^3$ be the three Pauli matrices, then every hermitian operator $A$ acting on $\H$ can be written as
$A = \frac{1}{2} \sum_{i=0}^{3} X_i (A)  \sigma_i$, 
where $X_i (A)= \Tr (A\sigma_i)$. This coordinate system $\{X_i\}_{i=0}^3$ allows one to identify $B^H(\H)$ with the Euclidean space $\RR^4$. In particular, the zero operator $O$ is identified with $(0,0,0,0)^T$, and the identity operator $\II$ is identified with $(2,0,0,0)^T$.

Of particular interest to us is the cone of positive (semi-definite) operators $\M^{+}= \{M \vert 0 \le M\}$. In terms of the Euclidean coordinates, $\M^{+} = \{M \vert X_0(M) \ge 0, X_0(M)^2 \ge X_1 (M)^2 + X_2 (M)^2 + X_3 (M)^2 \}$, and thus is also called the forward light-cone at the origin $O$--a borrowed terminology from special relativity~\cite{Jevtic2014a}. Another object of interest to us is the set of \emph{measurement outcomes} $\M= \M^{+} \cap \M^{-}$, where $\M^{-}= \{M \vert \II \ge M\}$. It is easy to see that $\M^{-}$ is the backward light-cone at $\II$. Thus $\M$ is a double-cone formed by the intersection of the forward light-cone at $O$ and the backward light-cone at $\II$. Finally, the 3-hyperplane $\P=\{M\vert X_0(M)=1\}$ is called Bloch hyperplane, and  $\S=\M \cap \P$ is known as the Bloch ball~\cite{Bengtsson2006a,Nielsen2010a}.  

A system of two qubits AB is described by the tensor product $\H_A \otimes \H_B$ where $\H_A$ and $\H_B$ are 2D Hilbert spaces. Operators acting on $\H_A$ will be denoted by $A_i$ or labelled by super/subscript $A$ such as $\II^A$, or $\sigma_i^A$; an analogous convention is applied to B. Let $\rho$ be a density operator, or state, of the system, that is, a positive unit-trace operator on $\H_A \otimes \H_B$. Using Pauli matrices, a density operator can be written as $\rho = \frac{1}{4} \sum_{i,j=0}^{3} \Theta_{ij} (\rho) \sigma_i^A \otimes \sigma_j^B$, where $\Theta_{ij} (\rho) = \Tr [\rho (\sigma_i^A \otimes \sigma_j^B)]$.

Now suppose Alice owns part A and Bob owns part B of the system. A positive operator valued measurement (POVM) performed on A is a decomposition of the identity operator $\II^A$ into some members $\{E^A_i\}_{i=1}^{n}$ of the measurement outcomes $\M_A$, $\II^A=\sum_{i=1}^{n} E^A_i$. If Alice gets a measurement outcome $E^A_i$, the unnormalized state of Bob's system will be found to be $\rho^{B}_{i}=\Tr_A [\rho (E^A_i \otimes \II^B)]$, where $\Tr_A$ denotes the partial trace operation over subsystem A~\cite{Nielsen2010a}. Note that $\Tr(\rho_i^B)$ is the probability of observing the measurement outcome $E_i^A$. Now a key observation is that this correspondence between $E^A_i$ and $\rho^B_i$ is independent of the POVM that contains the outcome $E^A_{i}$. The correspondence establishes a map, called Alice's \emph{EPR map} -- not to be confused with the steering map as defined in~\cite{Moroder2014a}.

More precisely, the EPR map $\rho^{A \to B}$ of a state $\rho$ is a positive linear map  $\rho^{A \to B}: B^{H}(\H_A) \rightarrow B^{H}(\H_B)$ defined by $\rho^{A \to B} (A) = \Tr_A [\rho (A \otimes \II^B)]$. If $U_A$ is an element or a subset of $B^{H}(\H_A)$, we denote its image under the EPR map by $U_A'=\rho^{A \to B} (U_A)$. In particular, $\M_A'$ is called Alice's \emph{steering outcomes}. Bob's EPR map and Bob's steering outcomes are defined analogously. In fact, the defined Alice's EPR map is simply the inverse of the so-called Pillis-Jamio{\l}kowski isomorphism, which maps a linear map between two operator spaces to an operator acting on their tensor product~\cite{Pillis1967a,Jamiokowski1972a}; see also~\cite{Jiang2013a}. 

\begin{figure}[thb!]
\begin{center}
\includegraphics[width=0.48\textwidth]{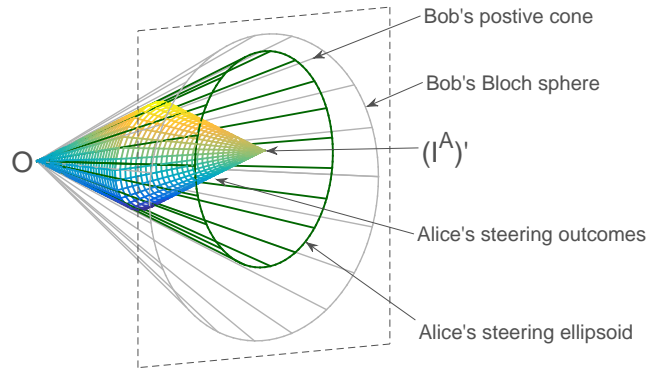}
\end{center}
\caption{(Color online) Three-dimensional representation of Alice's steering outcomes $\M_A'$ inside Bob's positive light-cone $\M_B^+$ together with their projective projections on the Bob's Bloch hyperplane.}
\label{fig: 3d-cones}
\end{figure}

If we use the Euclidean coordinates $\{X_i\}_{i=0}^{3}$ to represent the operators of $B^H(\H_A)$ and $B^H(\H_B)$, the EPR map $\rho^{A \to B}$ is simply a map from $\RR^4$ to $\RR^4$. More explicitly, it is easy to show that 
\begin{equation}
X_i (A')  = \frac{1}{2} \sum_{j=0}^{3} \Theta^{T}_{ij}(\rho) X_j (A).
\end{equation}
 
Figure~\ref{fig: 3d-cones} illustrates a 3D cross-section of Alice's steering outcomes $\M_A'$ relative to Bob's positive cone $\M_B^+$. One observes that $\M_A'$  is a skewed double-cone with two vertices at $O$ and $(\II^A)'=\Tr_A (\rho) \in \S_B$. The latter is also known as the reduced state of B~\cite{Nielsen2010a}. In the following, we show that the geometry of Alice's steering outcomes $\M_A'$ determines the non-separability and steerability of the state from Alice's side. The projective projection~\cite{Kadison1995a} of $\M_A'$ through the origin $O$ onto Bob's 3D Bloch hyperplane (see Figure~\ref{fig: 3d-cones}), known as Alice's \emph{steering ellipsoid}, has been studied in detail~\cite{Jevtic2014a,Milne2015a,Jevtic2015a}. As a result, the nested tetrahedron criterion for separability, which states that a state is separable if and only if Alice's steering ellipsoid is contained in a tetrahedron that fits in Bob's Bloch ball, has been discovered~\cite{Jevtic2014a}. However, the criterion appears to us as a mysterious fact. We will state a criterion for separability in terms of the 4D geometry of $\M_A'$ which demystifies the nested tetrahedron criterion. Moreover, this also provides a transition to studying the geometrical nature of steerability.

%-----------------------------------------------------------------------------------
\subsection*{Polyhedral boxes, packing and packability}
To study the geometry of Alice's steering outcomes, we introduce the concepts of polyhedral box and packability. Let $\U_B=\{B_i\}_{i=1}^{m}$ be a set of $m$ hermitian operators acting on $\H_B$, or equivalently, vectors of $\RR^4$. As in standard convex analysis~\cite{Rockafellar1970a}, the set $\cone(\U_B)=\{B=\sum_{i=1}^{m} \alpha_i B_i \vert \alpha_i \ge 0 \}$ is called the conical hull based on $\U_B$. Further, we define the \emph{polyhedral box} based on $\U_B$ to be $\cell (\U_B)= \{\sum_{i=1}^m \beta_i B_i \vert 0 \le \beta_i \le 1\}$. Such a polyhedral box can also be seen as a linear image in the 4D space of the unit $m$-cube.  The vertex $\sum_{i=1}^m B_i$ is called the \emph{principal vertex}. The set of Alice's steering outcomes $\M_A'$ is called \emph{$m$-packable} if it is contained in a polyhedral box with the principal vertex at $(\II^A)'$. The set of steering outcomes $\M_A'$ is called \emph{packable} if it is $m$-packable for some $m$.
%-----------------------------------------------------------------------------------
\subsection*{Separability}
A state $\rho$ over AB is called \emph{separable} if it can be written as a convex combination of some $s$ product states $\{\rho^A_i \otimes \rho^B_i\}_{i=1}^{s}$, i.e., $\rho= \sum_{i=1}^{s} p_i \rho^A_i \otimes \rho^B_i$, where $0 \le p_i \le 1$, $\sum_{i=1}^{s} p_i = 1$~\cite{Werner1989a}. For a two-qubit system, any separable state can be written in this form with $s \le 4$~\cite{Sanpera1998a}. The following proposition reveals a surprising connection between separability and packability.
\begin{proposition}
\label{pros:separability}
A two-qubit state $\rho$ is separable if and only if the set of Alice's steering outcomes $\M_A'$ is $4$-packable. 
\end{proposition}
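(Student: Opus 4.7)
For the ``only if'' direction, I invoke the theorem of Sanpera, Tarrach and Vidal that any separable two-qubit state admits a convex decomposition $\rho = \sum_{i=1}^{4} p_i\, \rho^A_i \otimes \rho^B_i$ with at most four product terms and $p_i \ge 0$, $\sum_i p_i = 1$. Setting $B_i := p_i \rho^B_i$ produces positive operators on $\H_B$ with $\sum_i B_i = \Tr_A(\rho) = (\II^A)'$, and for every $M \in \M_A$,
\[
\rho^{A\to B}(M) = \sum_i \Tr(\rho^A_i M)\, B_i,
\]
where each coefficient $\Tr(\rho^A_i M) \in [0,1]$ since $\rho^A_i$ is a state and $0 \le M \le \II^A$. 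Hence $\rho^{A\to B}(M) \in \cell(\{B_i\})$, showing that $\M_A'$ is $4$-packable.

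For the converse, suppose $\M_A' \subseteq \cell(\{B_i\}_{i=1}^4)$ with $\sum_i B_i = (\II^A)'$. After removing linearly dependent vectors and padding with zeros I may take $\{B_i\}$ linearly independent in $B^H(\H_B)$. Each $M \in \M_A$ then has a unique expansion $\rho^{A\to B}(M) = \sum_i \omega_i(M)\, B_i$ with $\omega_i(M) \in [0,1]$, and the coefficients extend to linear functionals $\omega_i$ on $B^H(\H_A)$. Scaling any positive operator into $\M_A$ shows $\omega_i \ge 0$ on the full positive cone, so by Riesz duality $\omega_i(\cdot) = \Tr(\tau^A_i\, \cdot)$ for some $\tau^A_i \ge 0$; the identity $\sum_i B_i = (\II^A)' = \rho^{A\to B}(\II^A)$, together with linear independence, forces $\omega_i(\II^A) = 1$, making each $\tau^A_i$ a quantum state. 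By the inverse Pillis--Jamio\l{}kowski correspondence I then recover $\rho = \sum_i \tau^A_i \otimes B_i$. Provided each $B_i$ is positive semi-definite, setting $p_i := \Tr(B_i)$ and $\rho^B_i := B_i/p_i$ exhibits this as a separable decomposition with $\sum_i p_i = 1$.

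The crux is therefore to establish $B_i \ge 0$ from the packing hypothesis, promoting the formal decomposition $\sum_i \tau^A_i \otimes B_i$ into a genuinely separable one. My plan is to exploit the two-sided constraint that both $O$ and $(\II^A)'$ are vertices of the skewed double-cone $\M_A' \subseteq \M_B^+$: the edges $B_i$ of a packing box with principal vertex $(\II^A)'$ and opposite vertex $O$ must extend from $O$ into $\M_B^+$. Upon central projection onto Bob's Bloch hyperplane $\P$, this requirement translates to the tetrahedron with vertices $\{B_i/\Tr(B_i)\}$ sitting inside Bob's Bloch ball $\S_B$, thereby recovering -- and demystifying -- the nested tetrahedron criterion.
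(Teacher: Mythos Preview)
Your ``only if'' direction is correct and cleaner than the paper's route: you produce the packing box directly from a Sanpera--Tarrach--Vidal decomposition, whereas the paper reaches it via the nested tetrahedron criterion.

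The genuine gap is in your final paragraph. You treat positivity of the $B_i$ as something to be \emph{derived} from the packing hypothesis, and sketch a plan based on the idea that the edges of any box with opposite vertices $O$ and $(\II^A)'$ ``must extend from $O$ into $\M_B^+$.'' This is false. Take the maximally entangled state ($p=1$ Werner): then $(\M_A^+)'$ is the entire forward light-cone, and one can inscribe it in a simplicial cone $\cone(\{B_i\}_{i=1}^4)$ whose four generators are \emph{spacelike} (non-positive) --- the projective picture is simply that a ball sits inside a tetrahedron whose vertices lie outside the ball. Running the centrosymmetric argument of Lemma~\ref{lem:independent-decomposable-sp} then packs $\M_A'$ in $\cell(\{c_iB_i\})$ with the correct principal vertex. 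So an entangled state is $4$-packable by hermitian operators, and no argument can force $B_i\ge 0$ from packability alone. The paper's own proof (Lemma~\ref{lem:independent-decomposable-sp}) works explicitly with \emph{positive} $B_i$; you should read positivity as part of the hypothesis, not as a crux to be supplied.

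Once positivity is granted, your algebraic argument is complete and takes a genuinely different route from the paper. The paper reduces $4$-packability to the $3$D nested tetrahedron condition on Bob's Bloch ball via two geometric lemmas (box $\leftrightarrow$ simplicial cone $\leftrightarrow$ tetrahedron in $\S_B$) and then cites the known equivalence with separability. You instead reconstruct the separable decomposition directly: the coefficient functionals $\omega_i$ are positive, unital, hence states $\tau_i^A$, and Pillis--Jamio\l{}kowski gives $\rho=\sum_i \tau_i^A\otimes B_i$. Your approach is more self-contained and bypasses the projective $3$D picture entirely; the paper's approach explains geometrically why the number $4$ and the tetrahedron appear. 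A minor point: ``removing linearly dependent vectors and padding with zeros'' does not yield linear independence; you should instead invoke (as the paper does) that a degenerate packing forces a degenerate steering ellipsoid, hence separability, and then assume the four $B_i$ are a basis.
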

If the set of Alice's steering outcomes is $4$-packable by linearly dependent operators, then Alice's steering ellipsoid is necessarily degenerate, and the state is separable~\cite{Jevtic2014a,Milne2015a}. Thus, in Proposition~\ref{pros:separability} the 4 vectors that form the polyhedral box to pack Alice's steering outcomes can be assumed to be linearly independent. In the Appendix, we show that this proposition is equivalent to the nested tetrahedron criterion for separability. The key to this equivalence is that the packability of $\M_A'$ by linearly independent positive operators is fully characterized by the cone $(\M_A^+)'$ (Appendix, Lemma~\ref{lem:independent-decomposable-sp}). The cone $(\M_A^+)'$ in turn can be characterized by its projective projection on Bob's Bloch hyperplane--the steering ellipsoid (Appendix, Lemma~\ref{lem:projective-principle-sp}). This is no longer true for packability with linearly dependent operators, in particular, for $m$-packability with $m > 4$. It then becomes clear that the limit number of $m=4$ in Proposition~\ref{pros:separability}, or the notion of tetrahedron in the nested tetrahedron criterion, appears due to the fact that it is the maximum number of linearly independent operators in $\B^H(\H_B)$. Moreover, it also suggests that steerability, which is equivalent to $m$-packability with $m$ possibly bigger than $4$ as stated in Proposition~\ref{pros:steerability}, is of fundamentally 4D geometry and cannot be seen completely in the projective projection of Alice's steering outcomes on Bob's Bloch hyperplane.

\subsection*{Steerability}
A subset $\C$ of all POVMs on a system is also called a measurement class $\C$. Relevant classes of measurements are projective measurements, where the measurement outcomes are orthogonal projections~\cite{Nielsen2010a}, and binary outcome POVMs. For a qubit, the former is a subclass of the latter.

Following~\cite{Wiseman2007a}, a state $\rho$ is called \emph{unsteerable} from Alice's side with respect to measurements of class $\C^A$ if there exists a decomposition of $(\II^A)'=\Tr_A(\rho)$ into an ensemble of $m$ positive operators of $\H_B$, $(\II^A)'= \sum_{i=1}^{m} B_i$, satisfying the following condition. For any measurement with $n$ outcomes $\{E^A_i\}_{i=1}^{n}$ of class $\C^A$ performed by Alice, the corresponding conditional ensemble of Bob's system B, $\{(E^A_i)'=\Tr_A [\rho (E^A_i \otimes \II^B)]\}_{i=1}^n$, can be expressed by a stochastic map from $\{B_i\}_{i=1}^m$ to $\{(E^A_i)'\}_{i=1}^n$, i.e.,
\begin{equation}
(E^A_i)' = \sum_{j=1}^{m} G_{ij} B_j,
\label{eq:unsteerable-def}
\end{equation}
where $G$ is a stochastic matrix, $0 \le G_{ij} \le 1$, $\sum_{i=1}^{n} G_{ij} = 1$. The set $\{B_i\}_{i=1}^m$ is called  the ensemble of \emph{local hidden states} (LHSs) for steering from Alice's side, which together with the stochastic map $G$ allow her to locally simulate steering~\cite{Wiseman2007a}. 

Determining the steerability of a state is a hard problem when one considers all possible POVMs~\cite{Werner2014a}. Most approaches are restricted to projective measurements. Nevertheless, for a system of two qubits, steerability with respect to all projective measurements is equivalent to steerability with respect to binary outcome POVMs (Appendix, Lemma~\ref{lem:projective-binary-sp}). The following proposition subsequently shows that the steerability from Alice's side with respect to binary outcome POVMs is completely determined by the geometry of her steering outcomes.

\begin{proposition}
\label{pros:steerability}
A two-qubit state $\rho$ is unsteerable from Alice's side for all binary outcome POVMs if and only if the set of Alice's steering outcomes $\M_A'$ is packable. 
\end{proposition}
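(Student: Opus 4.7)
The plan is to translate the definition of binary-outcome unsteerability directly into the geometric packability condition for $\M_A'$, with the only nontrivial step being the passage from hermitian generators (as allowed in the definition of a polyhedral box) to the positive generators required for a local-hidden-state ensemble.

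A binary POVM on Alice is just a pair $\{E,\II^A-E\}$ with $E\in\M_A$, so binary POVMs and elements of $\M_A$ are in bijection. Fix any candidate ensemble $\{B_j\}_{j=1}^m\subset\M_B^+$ with $\sum_j B_j=(\II^A)'$. Writing $(E)'=\sum_j G_{1j}B_j$ with $G_{1j}\in[0,1]$ automatically produces the complementary row via $(\II^A-E)'=(\II^A)'-(E)'=\sum_j(1-G_{1j})B_j$, so setting $G_{2j}:=1-G_{1j}$ makes $G$ stochastic. Thus unsteerability via $\{B_j\}$ for all binary POVMs is equivalent to $(E)'\in\cell(\{B_j\})$ for every $E\in\M_A$, i.e.\ $\M_A'\subseteq\cell(\{B_j\})$ with principal vertex $(\II^A)'$. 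In particular, the forward direction (unsteerable $\Rightarrow$ packable) is now immediate, since the LHSs are a fortiori hermitian.

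For the converse, assume $\M_A'\subseteq\cell(\{B_j\})$ for some hermitian $\{B_j\}$ summing to $(\II^A)'$. The key obstacle is to upgrade these to positive generators; once this is accomplished, the LHS model is produced by the reformulation above. I would exploit two structural facts about $\M_A'$: it lies inside the positive cone $\M_B^+$ (being the image of $\M_A$ under the positive EPR map), and it is contained in the operator interval $\{u:0\le u\le(\II^A)'\}$, with both $O$ and $(\II^A)'$ among its vertices. One route is to split each $B_j=B_j^+-B_j^-$ spectrally and augment the collection by appropriately chosen positive corrections so that the Minkowski sum of the new segments still contains $\M_A'$ and sums to $(\II^A)'$. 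A cleaner alternative is to work directly with the supporting hyperplanes and extreme rays of the skewed double-cone $\M_A'$ inside $\M_B^+$, paralleling the strategy of Lemma~\ref{lem:independent-decomposable-sp} used in the proof of Proposition~\ref{pros:separability}. I expect this upgrade from hermitian to positive generators to be the technical core of the proof, whereas the reduction of binary-POVM unsteerability to box-membership is largely a formal manipulation.
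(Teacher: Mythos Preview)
Your reduction of binary-outcome unsteerability to box-membership is exactly the paper's proof: both directions there consist of nothing more than the observation that a binary POVM $\{E,\II^A-E\}$ is encoded by a single point $E'\in\M_A'$, and that the two rows $(G_{1j})$, $(1-G_{1j})$ of a $2\times m$ stochastic matrix amount to the single condition $E'=\sum_j G_{1j}B_j\in\cell(\U_B)$. What you call a ``formal manipulation'' is the entire content of the paper's argument.

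The positivity upgrade you flag as the ``technical core'' does not appear in the paper. In the converse direction the paper simply takes the packing base $\U_B$ and uses it directly as the LHS ensemble, without addressing whether the $B_j$ are positive. This is consistent with how $\U_B$ is treated everywhere else (always chosen on the boundary of $\M_B^+$), so the intended reading of ``packable'' is evidently with positive generators, even though the formal definition of a polyhedral box says ``hermitian''. Under that reading your concern dissolves and your proof is already complete; under the literal definition you have spotted a small lacuna in the paper itself, but it is not the heart of the matter, and none of the machinery you sketch (spectral splitting, or a supporting-hyperplane argument in the style of Lemma~\ref{lem:independent-decomposable-sp}) is invoked there.
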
 
This is a non-trivial statement; generalizations to all POVMs and to higher dimensional systems are open problems with subtle difficulties. In the following, steerability will always be considered with respect to projective measurements.

Practically,  Proposition~\ref{pros:steerability} simplifies the problem of determining the steerability of two-qubit states. To find a necessary condition for steerability, one can choose some ansatz for the base $\U_B$, and check if the steering outcomes $\M_A'$ stay within the polyhedral box based on $\U_B$, in which case the state is unsteerable from Alice's side. Although an ansatz for $\U_B$ can also be considered as an ansatz for the ensemble of LHSs for steering from Alice's side, this approach shows that a given ansatz naturally generates a necessary condition for steerability for any state with B's reduced state at the principal vertex of $\cell(\U_B)$; in that sense, an ansatz for the ensemble of LHSs can be fully exploited. The main task in this procedure is to determine the boundary of the polyhedral box for a given ansatz $\U_B$, which is the subject of the following subsection. 

%===================================================================================
\subsection*{Determining the boundary of polyhedral boxes}
An ansatz $\U_B$ can always be chosen such that its vectors are on the boundary of $\M_B^+$. Indeed, if a vector of $\U_B$ is not on the boundary of $\M_B^{+}$, one can decompose it into a sum of two vectors on the boundary of $\M_B^{+}$ to form a new ansatz, whose polyhedral box contains the polyhedral box of the old ansatz. Each vector of such an ansatz is of the form $u_i \begin{pmatrix} 1 \\ \vec{n_i} \end{pmatrix}$, where $u_i$ determines its length and %$\begin{pmatrix} 1 \\ \vec{n_i} \end{pmatrix}$ is a vector on Bob's Bloch sphere
$\vec{n_i}$ is a 3D unit vector. More generally, each ansatz can be characterized by a distribution $u(\n)$ on the %Bloch sphere
3D unit sphere. 

Any vector of $\cell(\U_B)$ is of the form $\int \d \mu (\n) f(\n)  \begin{pmatrix} 1 \\ \n \end{pmatrix}$, where $0 \le f(\n) \le 1$ and $\d \mu( \n)$ denotes the measure on the 3D unit sphere generated by the distribution $u$. In particular, the principal vertex is $\int \d \mu ( \n) \begin{pmatrix} 1 \\ \n \end{pmatrix}$. This principal vertex should be on Bob's Bloch hyperplane, thus one has a normalization condition $\int \d \mu (\n)=1$. 

The cross-section of $\cell (\U_B)$ at some hyperplane $X_0 =x_0$, denoted $\cell(\U_B) \vert_{x_0}$, consists of vectors $\vec{b}= \int \d \mu ( \n)  f(\n) \n$ with $0 \le f(\n) \le 1$ and $x_0 = \int \d \mu (\n) f(\n)$. To determine the boundary of $\cell(\U_B) \vert_{x_0}$, we choose a direction $\n_0$ on the 3D unit sphere and project it onto that direction. The extreme point of the projection of the convex set $\cell(\U_B) \vert_{x_0}$  must be the projection of a point on its boundary. We are thus led to solving the optimization problem
\begin{equation}
\max_{0 \le f(\n) \le 1} \int \d \mu(\n) f(\n) \n_0^T \n  
\end{equation}
with constraint $x_0 = \int \d \mu (\n) f(\n)$. Using the method of Lagrange multipliers, we consider the modified objective function
\begin{equation}
L(f, \lambda) = \int  d\mu (\n ) \left [ f(\n) (\n_0^T \n - \lambda) + \lambda x_0 \right ].
\end{equation} 
When $\lambda$ is fixed, $L$ obtains its extremal value at the functions $f$ of the form $f_{\n_0} (\n) = 1_{\n_0^T \n > \lambda} (\n) + g(\n) 1_{\n_0^T \n = \lambda} (\n)$, where $g$ is any function taking values in $[0,1]$ and $1_X$ denotes the indicator function of a set $X$, $1_X(\n)=1$ if $\n \in X$ and $1_X(\n)=0$ if $\n \not\in X$. Each solution $f_{\n_0}(\n)$ then gives a point on the boundary of $\cell(\U_B)$ parametrized by $\lambda, \n_0$ and $g$,

%\begin{equation}\label{eq:boundary-map} 
%\left (\begin{matrix} x_0 \\ \vec{b} \end{matrix}\right ) = \int d\mu(n) [ \operatorname{H} (\n_0^T n - \lambda) - g(\n) \operatorname{I} (\n_0^T \n - \lambda) ] \begin{pmatrix} 1 \\ \n \end{pmatrix}.
%\end{equation}

\begin{align}
x_0 &= \int \d \mu(\n) \left [ 1_{\n_0^T \n > \lambda} (\n) + g(\n) 1_{\n_0^T \n =\lambda} (\n) \right  ],\\
\vec{b}&= \int \d \mu(\n) \left  [1_{\n_0^T \n > \lambda}(\n) + g(\n) 1_{\n_0^T \n=\lambda} (\n) \right ] \n.
\label{eq:boundary-map}
\end{align}
In the case $\mu$ is sufficiently fine (e.g., $u$ is continuous), the latter terms are integrals over zero-measure sets, thus vanish, and $g$ is irrelevant. In the other cases (e.g, $u$ has $\delta$-peaks), $\n_0^T \n=\lambda$ may be of non-zero measure. In fact, in these cases, $\cell(\U_B) \vert_{x_0}$ may have degenerate flat regions and $g$ allows one to get all the points on these flat regions.

The simplest case where these integrals can be calculated explicitly is when $u(\n)$ is a uniform distribution. In this case, $\mu$ is fine and $g$ is irrelevant. In fact, the cross-section of  $\cell{(\U_B)}$ at $X_0=x_0$ is a ball of radius $r_{\mathrm{uni.}}(x_0) = x_0 (1-x_0)$. Since in this case the principal vertex of $\cell(\U_B)$ is at the center of Bob's Bloch ball, this ansatz can be used to find a necessary condition for steerability for any state that has B's reduced state completely mixed. 

%===================================================================================
\subsection*{Example: Werner states and their modification}
%As an example, let us study the EPR maps and the steering outcomes of 
Werner states are defined by
\begin{equation}
W^p= p \ketbra{\Phi^{+}}{\Phi^{+}} + (1-p) \frac{\II^A}{2} \otimes \frac{\II^B}{2},
\end{equation}
where $\ket{\Phi^{+}}=\frac{1}{\sqrt{2}}(\ket{0,0}+\ket{1,1})$ is one of the Bell states and $0 \le p \le 1$~\cite{Werner1989a}. Using $\Theta_{ij} (W^p)=\Tr [W^p (\sigma_i^A \otimes \sigma_j^B)]$, one finds the matrix presentation of Alice's EPR map, $\frac{1}{2}\Theta^T (W^p)= \frac{1}{2}\operatorname{diag}(1,p,-p,p)$. The EPR map contracts the $X_0$-axis a factor of $\frac{1}{2}$, and every other axis a factor of $\frac{p}{2}$. Although the $X_2$-axis is also reflected, this is irrelevant to the geometry of $\M_A'$. 
When $p=1$, the Werner state is pure and $\M_A'$ touches the boundary of $\M_B^+$. On the other hand, when $p=0$, the Werner state is the product of two completely mixed states and $\M_A'$ shrinks to a single line-segment.

\begin{figure}[t!]
\begin{center}
\includegraphics[width=0.49\textwidth]{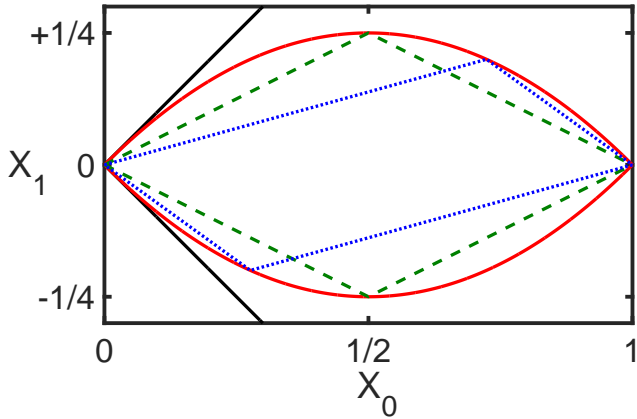}
\end{center}
\caption{(Color online) Two-dimensional cross-sections of the boundaries of the polyhedral box based on the uniform ansatz (red, solid), the steering outcomes of the Werner state at $p=\frac{1}{2}$ (green, dashed), and the steering outcomes of the modified Werner state at $p = 0.4$, $q \approx 0.75$ (blue, dotted). The outer-most black lines present the cross-section of the forward light-cone at the origin.}
\label{fig:limline}
\end{figure}

Elementary geometry shows that the set of steering outcomes $\M_A'$ is $4$-packable, or the Werner state is separable, if and only if $p \le \frac{1}{3}$. To find a sufficient condition for the Werner state to be unsteerable we use the uniform ansatz for $\U_B$. In fact, due to the spherical symmetry of Werner states, it is easy to see that this condition is also necessary~\cite{Wiseman2007a}. The boundary of $\cell (\U_B)$ is illustrated in Figure~\ref{fig:limline} together with the boundary of $\M_A'$. One finds that for $\M_A'$ to stay within this boundary, one needs $\frac{p}{2} \le r_{\mathrm{uni.}}(\frac{1}{2})$, or $p \le \frac{1}{2}$. We thus recovered the well-known results for Werner states regarding their non-separability and steerability~\cite{Werner1989a,Wiseman2007a}.

An advantage of using the 4D geometrical description is that the boundary of $\cell (\U_B)$, once determined, can be used to find a necessary condition for other states to be steerable as well. As an example, we consider the following modified Werner states,
\begin{equation}
\tilde{W}^p_{q}= p \ketbra{\Phi^{+}}{\Phi^{+}} + (1-p) \frac{\II^A + q \sigma_z^A}{2}\otimes \frac{\II^B}{2} ,
\end{equation}
with $0 \le \abs{q} \le 1$.  One notices that $\Tr_A (\tilde{W}_p^q)= \frac{\II^B}{2}$, thus the uniform ansatz is valid. The matrix for the EPR map is $\frac{1}{2} \Theta^{T} (\tilde{W}^p_{q}) = \frac{1}{2}[\operatorname{diag} (1,p,-p,p) + q (1-p) \delta_{1,4} ]$, where $\delta_{i,j}$ is the Kronecker matrix. The boundary of the steering outcomes $\M_A'$ for this state is also illustrated in Figure~\ref{fig:limline}. One easily finds that the condition for $\M_A'$ to be contained in $\cell(\U_B)$, which implies the unsteerability of the Werner state, is $\frac{\sqrt{1- 2 p}}{1-p} \ge \abs{q}$. Although this inequality can also be deduced from a recent result of Bowles~\textit{et al.}~\cite{Bowles2016a}, we have arrived at it simply based on the geometry of steering outcomes. 

%-----------------------------------------------------------------------------------
\section*{Conclusion}
By defining EPR maps, we are able to map the properties of a joint state of two qubits, namely non-separability and steerability, to the geometrical properties of steering outcomes. On one hand, our analysis clarifies the nested tetrahedron criterion for separability. On the other hand, we establish a general framework to determine the necessary condition for steerability. That this framework allows one to show the optimality of a LHS model will be discussed in a subsequent work~\cite{Nguyen2016b}. Our work further opens new interesting questions. Although steerability with binary POVMs is a geometrical property, it remains to be clarified if this is still true for general POVMs. The question whether Bell non-locality can be reduced to the geometry of steering outcomes is also to be explored. 

%-----------------------------------------------------------------------------------
\begin{acknowledgments}
We would like to thank Xuan Thanh Le and Kimmo Luoma for helpful discussions. Comments from Alan Celestino, Anna Deluca, Michael Hall, Sania Jevtic, Antony Milne, Ba An Nguyen and Huy Viet Nguyen on our early manuscript are gratefully acknowledged.
\end{acknowledgments}

%===================================================================================
\appendix
\section*{Appendix}
%-----------------------------------------------------------------------------------
\setcounter{lemma}{0}
\setcounter{proposition}{0}

In this Appendix, we provide details for the claims stated in the main text.

\subsection*{Separability}
In this subsection, we give the proof of Proposition 1 saying that a state is separable if and only if the Alice's steering outcomes is $4$-packable.

\begin{proposition}
\label{pros:separability-sp}
A two-qubit state $\rho$ is separable if and only if the set of Alice's steering outcomes $\M_A'$ is $4$-packable.
\end{proposition}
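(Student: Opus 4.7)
My plan is to prove the two implications separately and, for the harder direction, reduce via two intermediate geometric lemmas to the nested tetrahedron criterion of Jevti\'c et al., which is already known to be equivalent to separability of two-qubit states.

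For the direction $(\Rightarrow)$, I would start from a separable decomposition $\rho = \sum_{i=1}^{s} p_i\, \rho_i^A \otimes \rho_i^B$; by~\cite{Sanpera1998a} one may take $s \le 4$, padding with zero summands if necessary. Direct computation of the EPR map yields
\begin{equation}
\rho^{A \to B}(A) = \sum_{i=1}^{4} p_i \Tr(\rho_i^A A)\, \rho_i^B.
\end{equation}
Setting $B_i = p_i \rho_i^B$, one has $B_i \ge 0$, $\sum_i B_i = \Tr_A(\rho) = (\II^A)'$, and for every $A \in \M_A$ the coefficient $\Tr(\rho_i^A A)$ lies in $[0,1]$ since $\rho_i^A$ is a density operator and $0 \le A \le \II^A$. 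Hence $\M_A' \subset \cell(\{B_i\}_{i=1}^4)$ with principal vertex $(\II^A)'$, establishing $4$-packability.

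For the direction $(\Leftarrow)$, I would suppose $\M_A' \subset \cell(\{B_i\}_{i=1}^{4})$ with $\sum_i B_i = (\II^A)'$ and split into two cases according to the linear (in)dependence of $\{B_i\}$. If $\{B_i\}$ is linearly dependent, then $(\M_A^+)' = \rho^{A\to B}(\M_A^+)$ is contained in the conical hull of $\{B_i\}$ and therefore spans at most a $3$-dimensional subspace of $B^H(\H_B)$; its projective projection on Bob's Bloch hyperplane --- Alice's steering ellipsoid --- is consequently degenerate, and separability follows at once from the nested tetrahedron criterion~\cite{Jevtic2014a,Milne2015a}. In the linearly independent case $\{B_i\}$ is a basis of $B^H(\H_B)$, so the coefficient maps $\alpha_i: B^H(\H_A) \to \RR$ defined by $\rho^{A\to B}(A) = \sum_i \alpha_i(A) B_i$ are unique and linear, with $\alpha_i(\M_A) \subset [0,1]$. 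Homogeneity and the fact that any $A \ge 0$ can be scaled into $\M_A$ force each $\alpha_i$ to be a positive linear functional, so by Riesz representation $\alpha_i(A) = \Tr(F_i A)$ for some $F_i \ge 0$; the relation $\alpha_i(\II^A) = 1$, read off from $\sum_j B_j = \rho^{A\to B}(\II^A)$ and the uniqueness of coordinates in a basis, makes $F_i$ a density operator. Via the Jamio\l{}kowski isomorphism this produces the candidate decomposition $\rho = \sum_{i=1}^{4} F_i^T \otimes B_i$.

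The subtle step --- and the place where I expect the real work --- is to promote this candidate into a genuine separable decomposition: $\rho \ge 0$ together with $F_i \ge 0$ does not by itself force $B_i \ge 0$. To close the argument I would invoke the two geometric lemmas announced in the main text: Lemma~\ref{lem:independent-decomposable-sp}, which characterizes packability of $\M_A'$ by linearly independent \emph{positive} operators purely in terms of the cone $(\M_A^+)'$, and Lemma~\ref{lem:projective-principle-sp}, which in turn encodes this cone by its projective projection onto Bob's Bloch hyperplane --- the steering ellipsoid. Together they reduce the question to whether the steering ellipsoid fits inside some tetrahedron inscribed in Bob's Bloch ball, which is precisely the nested tetrahedron criterion of~\cite{Jevtic2014a}, and hence equivalent to separability of $\rho$.
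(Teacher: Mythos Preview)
Your proposal is correct. For the direction $(\Leftarrow)$ you arrive at exactly the paper's strategy: dispose of the linearly dependent case via degeneracy of the steering ellipsoid, and in the linearly independent case reduce, through Lemmas~\ref{lem:independent-decomposable-sp} and~\ref{lem:projective-principle-sp}, to the nested tetrahedron criterion of~\cite{Jevtic2014a}. The detour through the operators $F_i$ and the candidate decomposition $\rho = \sum_i F_i^T \otimes B_i$ is exploratory and, as you yourself recognize, does not close the argument on its own (nothing forces the $B_i$ in an arbitrary packing to be positive); it can be omitted without loss.

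The direction $(\Rightarrow)$, by contrast, is genuinely more direct than the paper's route. The paper establishes \emph{both} implications by passing through the nested tetrahedron criterion, whereas your argument --- reading off $B_i = p_i\rho_i^B$ from a separable decomposition with $s\le 4$ and observing that $\Tr(\rho_i^A A)\in[0,1]$ for every $A\in\M_A$ --- yields $4$-packability immediately, with no appeal to steering ellipsoids. This makes the forward implication self-contained and clarifies that the external input from~\cite{Jevtic2014a} is really only needed for the converse.
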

As reasoning in the main text, we can assume that Alice's steering outcomes is $4$-packable with 4 linearly independent vectors. The proposition then follows directly from the following lemmas.

\begin{lemma}
\label{lem:independent-decomposable-sp}
The set of steering outcomes $\M_A'$ is $4$-packable with 4 linearly independent positive operators if and only if the cone $(\M_A^{+})'$ is contained in a conical hull of 4 linearly independent positive operators. 
\end{lemma}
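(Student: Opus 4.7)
The plan is to prove the two directions of this equivalence by exploiting a single observation: the positive cone $\M_A^+$ is the union of all positive dilations of the double-cone $\M_A$. Explicitly, if $A \in \M_A^+$ is nonzero, then $tA \in \M_A$ for $t$ sufficiently small, so $\M_A^+ = \bigcup_{t \ge 0} t \M_A$. Applying the linear EPR map gives $(\M_A^+)' = \bigcup_{t \ge 0} t \M_A'$. On the other side, for a polyhedral box with principal vertex $(\II^A)' = \sum_i B_i$, we similarly have $\cone(\U_B) = \bigcup_{t \ge 0} t \cell(\U_B)$, because $\cell(\U_B)$ is a $[0,1]$-scaled copy of the generators and rescaling by $\max_i \alpha_i$ brings any point of $\cone(\U_B)$ into the box.

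For the forward direction, assume $\M_A' \subseteq \cell(\U_B)$ where $\U_B = \{B_1,\dots,B_4\}$ are linearly independent positive operators with $\sum_i B_i = (\II^A)'$. Since $\cell(\U_B) \subseteq \cone(\U_B)$ trivially, we get $\M_A' \subseteq \cone(\U_B)$. Taking the conical hull of both sides and using that $\cone(\U_B)$ is already a cone yields $(\M_A^+)' \subseteq \cone(\U_B)$, which is the desired containment.

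For the backward direction, assume $(\M_A^+)' \subseteq \cone(\U_B)$ with $\U_B$ of four linearly independent positive operators. Since $(\II^A)' = \Tr_A(\rho)$ lies in $(\M_A^+)'$, we can write $(\II^A)' = \sum_i \alpha_i B_i$ with $\alpha_i \ge 0$. Assuming the generic case $\alpha_i > 0$, set $\tilde{B}_i = \alpha_i B_i$; these remain linearly independent and positive, and satisfy $\sum_i \tilde{B}_i = (\II^A)'$. Now, for any $M \in \M_A'$, both $M$ and $(\II^A)' - M$ lie in $(\M_A^+)'$, hence in $\cone(\tilde{\U}_B)$; write $M = \sum_i \beta_i \tilde{B}_i$ and $(\II^A)' - M = \sum_i \gamma_i \tilde{B}_i$ with $\beta_i, \gamma_i \ge 0$. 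Adding, $\sum_i (\beta_i + \gamma_i) \tilde{B}_i = \sum_i \tilde{B}_i$; linear independence forces $\beta_i + \gamma_i = 1$, whence $\beta_i \in [0,1]$, and therefore $M \in \cell(\tilde{\U}_B)$.

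The main obstacle I anticipate is the degenerate case in the backward direction where some $\alpha_i$ vanish. In that situation the natural rescaling $\tilde{B}_i = \alpha_i B_i$ collapses to fewer than four operators, yielding only a lower-dimensional packing and breaking the required linear independence. To address this, I would first observe that if $\alpha_{i_0} = 0$, then the same linear-independence argument forces the $B_{i_0}$-coefficient of every $M \in \M_A'$ to vanish, so $\M_A'$ (and in fact $(\M_A^+)'$) already sits inside the conical hull of the surviving generators; then one recovers four linearly independent positive packing vectors by a small perturbation, replacing $B_{i_0}$ by a suitable positive operator outside the span of the others while absorbing the correction into the remaining $\tilde{B}_i$. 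Verifying that the perturbed operators stay positive and that the box still contains $\M_A'$ is the technically delicate step, and the natural way to handle it is to slightly shrink $\M_A'$ away from the offending facet before perturbing, using the closedness of the containment.
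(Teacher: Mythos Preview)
Your proof is correct and follows essentially the same route as the paper's. The forward direction is a streamlined version of the paper's half-space argument (both rest on $(\M_A^+)'=\cone(\M_A')$), and your reverse step---placing both $M$ and $(\II^A)'-M$ in $\cone(\U_B)$ and using linear independence to force $\beta_i\in[0,1]$---is precisely the paper's observation that $\M_A'\subseteq\cone(\U_B)\cap\cone(\U_B)^{-}$, where $\cone(\U_B)^{-}$ is the reflection through $\tfrac{1}{2}(\II^A)'$; indeed the paper's rescaling $\tilde B_i=\lambda_i B_i$ with $\lambda_i=\max\{\lambda:\lambda B_i\in\cone(\U_B)^{-}\}$ gives exactly your $\lambda_i=\alpha_i$. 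Regarding the degenerate case $\alpha_{i_0}=0$ that you flag: the paper's proof does not treat it either (its $\tilde B_{i_0}$ likewise collapses to zero), and the issue is instead absorbed at the level of Proposition~\ref{pros:separability} via the remark that a degenerate steering ellipsoid already implies separability---so your perturbation workaround, while reasonable, is not needed for how the lemma is actually used.
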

\begin{proof} 
Assume that $\M_A'$ is $4$-packable by a set of $4$ linearly independent vectors $\U_B= \{B_i \vert 0 \le i \le 4\}$. For each $i$, denote $H_i$ the hyperplane spanned by $\U_B \setminus \{B_i\}$. Denote $H_i^+$ the half-space divided by $H_i$ that contains $B_i$. Since $\M_A' \subseteq \cell(\U_B)$, it is also contained in $H_i^+$. This implies that $(\M_A^+)' = \cone (\M_A') \subseteq \cone (H_i^+) = H_i^+$ for every $i$, or  $(\M_A^{+})' \subseteq \cap_{i=1}^{4} H_i^{+}$. On the other hand, for linearly independent operators $B_i$, one has $\cap_{i=1}^{4}  H_i^{+}= \cone (\U_B)$, thus  $(\M_A^{+})' \subseteq \cone (\U_B)$.

For the reverse direction, we assume that the cone $(\M_A^+)'$ is contained in a conical hull, $\cone(\U_B)$, formed by $4$ linearly independent positive operators $B_i, 1\le i \le 4$. Note that $\frac{1}{2}(\II^A)'$ is the center of symmetry of $\M_A'$. Denote $\cone(\U_B)^-$ the reflection of $\cone(\U_B)$ through $\frac{1}{2}(\II^A)'$. Since $(\M_A^-)'$ is the reflection of $(\M_A^+)'$ through $\frac{1}{2}(\II^A)'$, we deduce that $(\M_A^-)' \subseteq \cone(\U_B)^-$. Thus $\M_A' = (\M_A^+)' \cap (\M_A^-)'$ is contained in $\cone(\U_B) \cap \cone(\U_B)^-$. When $B_i$ are linearly independent, the latter is a polyhedral box based on $\tilde{\U}_B=\{\tilde{B}_i\}_{i=1}^{4}$, where $\tilde{B}_i= \lambda_{i} B_i $ with $\lambda_i=\max \{\lambda \vert \lambda B_i \in \cone(\U_B)^{-} \}$.
\end{proof}

\begin{lemma} 
\label{lem:projective-principle-sp}
The cone $(\M_A^{+})'$ is contained in a conical hull formed by 4 linearly independent positive operators if and only if  $\S_A'$ is contained in a tetrahedron which is contained in Bob's Bloch ball $\S_B$.
\end{lemma}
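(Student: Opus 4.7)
The plan is to reduce both directions to a projective-projection argument: cones with apex at the origin in $\RR^4$ are determined by their cross-sections with the Bloch hyperplane $\P_B$. Since the EPR map is positive one has $(\M_A^+)' \subseteq \M_B^+$, and every non-zero $v \in \M_B^+$ satisfies $X_0(v) > 0$ (by the light-cone inequality $X_0(v)^2 \ge X_1(v)^2 + X_2(v)^2 + X_3(v)^2$ together with $X_0(v) \ge 0$). Hence $v/X_0(v) \in (\M_A^+)' \cap \P_B$ for every non-zero $v \in (\M_A^+)'$, and this cross-section coincides with the projective projection of $\S_A'$, i.e.\ Alice's steering ellipsoid of the main text. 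Consequently, an inclusion $(\M_A^+)' \subseteq \cone(B_1,\ldots,B_4)$ is equivalent to the corresponding inclusion of cross-sections in $\P_B$.

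For the right-hand side of the equivalence, rescale each non-zero positive $B_i$ to $\tilde B_i := B_i/X_0(B_i) \in \M_B^+ \cap \P_B = \S_B$, using the qubit-specific fact that a trace-one positive operator on a $2$-dimensional Hilbert space automatically satisfies $M \le \II$. Linear independence of $B_1,\ldots,B_4$ in $\RR^4$ is then equivalent to affine independence of $\tilde B_1,\ldots,\tilde B_4$ in the $3$-flat $\P_B$: any linear dependence $\sum_i \alpha_i \tilde B_i = 0$ yields $\sum_i \alpha_i = \sum_i \alpha_i X_0(\tilde B_i) = 0$ and is therefore automatically affine, while any affine dependence rescales back to a linear dependence among the $B_i$. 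Under this equivalence the cross-section $\cone(B_1,\ldots,B_4)\cap \P_B$ equals the tetrahedron $\operatorname{conv}(\tilde B_1,\ldots,\tilde B_4)$, which lies in $\S_B$ by convexity.

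With this dictionary in hand, both directions are symmetric. For the forward direction, intersecting the assumption $(\M_A^+)' \subseteq \cone(B_1,\ldots,B_4)$ with $\P_B$ places the steering ellipsoid inside $\operatorname{conv}(\tilde B_1,\ldots,\tilde B_4) \subseteq \S_B$. For the converse, given a tetrahedron $T = \operatorname{conv}(\tilde B_1,\ldots,\tilde B_4) \subseteq \S_B$ containing the steering ellipsoid, the vertices $\tilde B_i$ are positive and linearly independent by the dictionary; any non-zero $v \in (\M_A^+)'$ rescales via $X_0(v)>0$ into the steering ellipsoid, hence into $T$, and multiplying back through by $X_0(v)$ exhibits $v \in \cone(\tilde B_1,\ldots,\tilde B_4)$.

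The main difficulty is nothing deep, merely careful bookkeeping: one must verify that the cross-section $(\M_A^+)' \cap \P_B$ is indeed the projective projection of $\S_A'$ as referred to in the statement, and lay out the linear-versus-affine-independence dictionary cleanly enough that nondegeneracy of the tetrahedron is transferred back and forth. Beyond these two checks the argument is elementary convex-cone geometry and requires no further machinery.
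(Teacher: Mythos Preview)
Your proposal is correct and follows essentially the same approach as the paper: both directions exploit the correspondence between a cone at the origin and its cross-section with the Bloch hyperplane $\P_B$, together with $(\M_A^+)'=\cone(\S_A')$. Your treatment is in fact more careful than the paper's on the bookkeeping points you flag---the linear-versus-affine independence dictionary and the strict positivity of $X_0$ inside $\M_B^+$---which the paper leaves implicit.
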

\begin{proof}
Assume that $(\M_A^+)'$ is contained in the conical hull based on $\U_B= \{B_i \vert 1 \le i \le 4 \}$. Let $\P_B$ denote Bob's Bloch hyperplane $X_0(M) = 1$, then $(\M_A^+)' \cap \P_B = \S_A' \subseteq \cone (\U_B) \cap \P_B$. Since $B_i$ are linearly independent, $\cone(\U_B) \cap \P_B$ is a tetrahedron. Moreover, this tetrahedron is contained in Bob's Bloch ball $\S_B$ because $\U_B$ consists of positive operators.

For the reverse direction, assume that $\S_A'$ is contained in a tetrahedron which is the convex hull of $\U_B = \{B_i \vert 1 \le i \le 4 \}$. The operators $B_i$ are linearly independent, otherwise the tetrahedron is degenerate. Now, we have $(\M_A^{+})' = \cone (\S_A') \subseteq \cone (\U_B)$, which is the required condition.
\end{proof}

%-----------------------------------------------------------------------------------
\subsection*{Steerability}
We first show that for a two-qubit state, steerablity with respect to binary outcome POVMs is equivalent to steerability with respect to all projective measurements.

\label{sec:steerability}
\begin{lemma}
\label{lem:projective-binary-sp}
A two-qubit state $\rho$ is unsteerable from Alice's side with respect to all projective measurements if and only if it is unsteerable from Alice's side with respect to all binary outcome POVMs.
\end{lemma}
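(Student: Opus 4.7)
The plan is to prove the equivalence by establishing both directions separately. The implication ``unsteerable with respect to all binary outcome POVMs implies unsteerable with respect to all projective measurements'' is immediate: on a qubit every nontrivial projective measurement consists of two orthogonal rank-one projections summing to $\II^A$, hence is itself a binary outcome POVM, so any LHS ensemble simulating all binary POVMs \emph{a fortiori} simulates all projective measurements.

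For the nontrivial converse, the strategy is to show that a single LHS ensemble $\{B_j\}_{j=1}^{m}$ simulating every projective measurement on Alice's side automatically simulates every binary POVM. The key ingredient is to decompose an arbitrary effect $E \in \M_A$ into a pair of orthogonal rank-one projections. Writing $E = \alpha \II^A + \vec{r}\cdot\vec{\sigma}^A$ with $r := |\vec{r}|$, the constraints $0 \le E \le \II^A$ yield $r \le \alpha \le 1-r$. Setting $P_\pm = (\II^A \pm \hat{n}\cdot\vec{\sigma}^A)/2$ with $\hat{n} = \vec{r}/r$ (any unit vector if $r=0$) produces $E = (\alpha+r)\,P_+ + (\alpha-r)\,P_-$ with both coefficients in $[0,1]$, thereby realizing $\{E, \II^A-E\}$ as a classical post-processing of the projective measurement $\{P_+, P_-\}$. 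Applying the hypothesis to $\{P_+, P_-\}$ gives $(P_\pm)' = \sum_j G_{\pm j} B_j$ with $G_{+j}+G_{-j}=1$ and $G_{\pm j} \in [0,1]$. Linearity of the EPR map then yields $E' = \sum_j H_j B_j$ with $H_j := (\alpha+r)G_{+j} + (\alpha-r)G_{-j} = (\alpha-r) + 2r\,G_{+j}$, which lies in $[\alpha-r, \alpha+r] \subseteq [0,1]$, and correspondingly $(\II^A-E)' = \sum_j (1-H_j) B_j$. This exhibits the required stochastic map for $\{E, \II^A-E\}$ relative to the same LHS ensemble.

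The main conceptual obstacle---and the reason the lemma is not trivial---is that unsteerability demands a single LHS ensemble that works for \emph{all} measurements of the class. My argument sidesteps this by showing that the simulation coefficients for $E$ are determined entirely by those for the underlying projection $P_+$ together with scalars depending only on $E$; this works precisely because in two dimensions every binary POVM is a coarse-graining of a single projective measurement, a property special to qubits that one expects to fail in higher dimensions. Beyond this, the only care needed is verifying the edge cases $r=0$ and $E \in \{0, \II^A\}$, all of which are covered by the same formulas with the free choice of $\hat{n}$.
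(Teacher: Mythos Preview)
Your proof is correct and follows essentially the same approach as the paper's: both obtain the nontrivial direction by spectrally decomposing the effect $E^A_1$ as a $[0,1]$-coefficient combination of two orthogonal rank-one projections, then compose the resulting $2\times 2$ stochastic matrix with the LHS stochastic map for that projective measurement. The only cosmetic difference is that you work in explicit Bloch coordinates while the paper phrases the spectral decomposition abstractly and invokes the fact that a product of stochastic matrices is stochastic.
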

\begin{proof}
It is obvious that unsteerability with respect to binary outcome POVMs implies unsteerability with projective measurements. We prove the converse statement. Suppose a state $\rho$ is unsteerable with respect to all projective measurements from Alice's side and $\U_B=\{B_i\}_{i=1}^{m}$ is an ensemble of LHSs for steering from Alice's side. Suppose Alice makes a binary POVM $\II^A= E^A_1+E^A_2$. Consider the spectral decomposition of $E^A_1$, $E^A_1= H_{11} P^A_1+ H_{12} P^A_2$, where $P^A_1$ and $P^A_2$ form a complete set of two orthogonal projections, and $H_{11}$ and $H_{12}$ are positive eigenvalues of $E^A_1$. Then the spectral decomposition of $E^A_2$ is $E^A_2= (1-H_{11}) P^A_1 + (1-H_{12}) P^A_2$. That is to say, $E^A_i= \sum_{j=1}^2 H_{ij} P^A_j$, where $H$ is a $(2 \times 2)$ stochastic matrix with $H_{21}=1-H_{11}$, $H_{22}=1-H_{12}$. Since $P^A_1$ and $P^A_2$ constitute a projective measurement by Alice, and since $\U_B$ is an ensemble of LHSs for steering from Alice's side with projective measurements, it follows that there exists a $(2 \times m)$ stochastic matrix $K$ such that $(P^A_i)'=\sum_{j=1}^{m} K_{ij} B_j$. But this also implies that $(E^A_i)'= \sum_{j=1}^{2} \sum_{k=1}^{m} H_{ij} K_{jk} B_k$. Since $H$ and $K$ are stochastic matrices, $G=HK$ is also a stochastic matrix. Therefore the state is also unsteerable with respect to all binary outcome measurements.
\end{proof}
 
Furthermore, we show:
\begin{proposition}
\label{pros:steerability-sp}
A two-qubit state $\rho$ is unsteerable from Alice's side for all binary outcome POVMs if and only if the set of Alice's steering outcomes $\M_A'$ is packable. 
\end{proposition}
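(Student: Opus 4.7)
The plan is to match the definition of unsteerability (for binary POVMs) directly to the definition of packability, exploiting the fact that a binary POVM on Alice's side is fully determined by a single outcome $M \in \M_A$ via its complement $\II^A - M \in \M_A$. Under the identification of the LHS ensemble $\{B_i\}_{i=1}^m$ with the base of a polyhedral box, the stochastic-matrix condition for the two outcomes collapses to the single condition that every image $M'$ lies in $\cell(\U_B)$, provided the principal vertex $\sum_i B_i$ equals $(\II^A)'$.

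For the forward direction I would start from an LHS ensemble $\U_B = \{B_i\}_{i=1}^m$ witnessing unsteerability, so in particular $\sum_i B_i = (\II^A)'$. For any $M \in \M_A$, the pair $\{M, \II^A - M\}$ is a valid binary POVM, and by hypothesis there exists a $2 \times m$ stochastic matrix $G$ with $M' = \sum_j G_{1j} B_j$. Since $0 \le G_{1j} \le 1$, this exhibits $M'$ as an element of $\cell(\U_B)$; letting $M$ range over $\M_A$ gives $\M_A' \subseteq \cell(\U_B)$ and hence packability.

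For the reverse direction I would assume $\M_A' \subseteq \cell(\U_B)$ with $\sum_i B_i = (\II^A)'$. Given any binary POVM $\{E_1^A, E_2^A\}$, the outcome $E_1^A \in \M_A$ yields coefficients $0 \le \beta_j \le 1$ such that $(E_1^A)' = \sum_j \beta_j B_j$; subtracting from $(\II^A)' = \sum_j B_j$ gives $(E_2^A)' = \sum_j (1-\beta_j) B_j$. Setting $G_{1j}=\beta_j$ and $G_{2j}=1-\beta_j$ yields a stochastic matrix (every column sums to $1$), verifying that $\U_B$ serves as an LHS ensemble and thus that $\rho$ is unsteerable against all binary outcome POVMs.

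As for the main obstacle, the proposition turns out to be essentially a translation of definitions once binary POVMs are admitted, so the difficulty is conceptual rather than technical: one must recognize the involution $M \mapsto \II^A - M$ on $\M_A$ as the structural bridge between the two notions. The genuine technical content of the framework sits in the companion Lemma~\ref{lem:projective-binary-sp}, which reduces projective measurements on a qubit to binary POVMs; without that reduction the present geometric criterion would only capture a strictly weaker form of steerability. The single substantive checkpoint in the proof above is that the condition $\sum_i B_i = (\II^A)'$ appears both as the identity-resolution in the LHS ensemble and as the principal-vertex requirement in the definition of packability, so the two formulations align precisely rather than up to a rescaling.
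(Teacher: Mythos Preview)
Your proof is correct and essentially identical to the paper's own argument: in both directions the paper also completes an arbitrary $M\in\M_A$ to the binary POVM $\{M,\II^A-M\}$, reads off the first row of the stochastic matrix to place $M'$ in $\cell(\U_B)$, and conversely defines $G_{2j}=1-G_{1j}$ from the box coefficients to build the required stochastic matrix. Your closing remarks about the involution $M\mapsto\II^A-M$ and the alignment $\sum_i B_i=(\II^A)'$ capture exactly the conceptual content the paper relies on.
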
 
\begin{proof}
Suppose for binary outcome measurements, the state is unsteerable from Alice's side. Then there exists a set $\U_B=\{B_i\}_{i=1}^{m}$ of $m$ positive operators playing the role of an ensemble of LHSs for steering from Alice's side. We will show that $\M_A' \subseteq \cell(\U_B)$. Indeed, take $B \in \M_A'$, then there exists an operator $E^A_1 \in \M_A$ such that $(E^A_1)' = B$. Let $E^A_2= \II^A-E^A_1$, then $\{E^A_1,E^A_2\}$ constitutes a binary outcome POVM performed by Alice. By definition of an ensemble of LHSs, there exists a $(2 \times m)$ stochastic matrix $G$ such that $(E^A_i)' = \sum_{j=1}^{m} G_{ij} B_j$, in particular $B=(E^A_1)'= \sum_{j=1}^{m} G_{1j} B_j$. Since $0 \le G_{1j} \le 1$, this implies that $B \in \cell(\U_B)$. 

Now suppose  $\M_A' \subseteq \cell(\U_B)$, with $\U_B=\{B_i\}_{i=1}^{m}$, $(\II^A)' = \sum_{i=1}^{m} B_i$, we show that $\U_B$ can play the role of an ensemble of LHSs for all binary measurements performed by Alice. Any binary outcome POVM performed by Alice is of the form $\II^A=E^A_1+E^A_2$ with $E^A_i \in \M_A$. This POVM induces a decomposition of the reduced state of B,  $(\II^A)'= (E^A_1)' + (E^A_2)'$. Because $(E^A_1)' \in \M_A' \subseteq \cell(\U_B)$, there exist $m$ numbers $\{0 \le G_{1j} \le 1\}_{j=1}^{m}$ such that $(E^A_1)'=\sum_{j=1}^{m} G_{1j} U^B_j$. It follows that $(E^A_2)'=(\II^A)'-(E^A_1)'=\sum_{j=1}^{m} [1-G_{1j}] U^B_j$. Let $G_{2j}=1-G_{1j}$, then $G$ is a $(2 \times m)$ stochastic matrix that satisfies $(E^A_i)'=\sum_{j=1}^{m} G_{ij} U^B_j$ for $i=1$, $2$. This implies that $\U_B$ is an ensemble of LHSs and the state is unsteerable from Alice's side.
\end{proof}

\bibliography{../bibtex/quantum-steering.bib}

%merlin.mbs apsrev4-1.bst 2010-07-25 4.21a (PWD, AO, DPC) hacked
%Control: key (0)
%Control: author (0) dotless jnrlst
%Control: editor formatted (1) identically to author
%Control: production of article title (0) allowed
%Control: page (1) range
%Control: year (0) verbatim
%Control: production of eprint (0) enabled
\begin{thebibliography}{35}%
\makeatletter
\providecommand \@ifxundefined [1]{%
 \@ifx{#1\undefined}
}%
\providecommand \@ifnum [1]{%
 \ifnum #1\expandafter \@firstoftwo
 \else \expandafter \@secondoftwo
 \fi
}%
\providecommand \@ifx [1]{%
 \ifx #1\expandafter \@firstoftwo
 \else \expandafter \@secondoftwo
 \fi
}%
\providecommand \natexlab [1]{#1}%
\providecommand \enquote  [1]{``#1''}%
\providecommand \bibnamefont  [1]{#1}%
\providecommand \bibfnamefont [1]{#1}%
\providecommand \citenamefont [1]{#1}%
\providecommand \href@noop [0]{\@secondoftwo}%
\providecommand \href [0]{\begingroup \@sanitize@url \@href}%
\providecommand \@href[1]{\@@startlink{#1}\@@href}%
\providecommand \@@href[1]{\endgroup#1\@@endlink}%
\providecommand \@sanitize@url [0]{\catcode `\\12\catcode `\$12\catcode
  `\&12\catcode `\#12\catcode `\^12\catcode `\_12\catcode `\%12\relax}%
\providecommand \@@startlink[1]{}%
\providecommand \@@endlink[0]{}%
\providecommand \url  [0]{\begingroup\@sanitize@url \@url }%
\providecommand \@url [1]{\endgroup\@href {#1}{\urlprefix }}%
\providecommand \urlprefix  [0]{URL }%
\providecommand \Eprint [0]{\href }%
\providecommand \doibase [0]{http://dx.doi.org/}%
\providecommand \selectlanguage [0]{\@gobble}%
\providecommand \bibinfo  [0]{\@secondoftwo}%
\providecommand \bibfield  [0]{\@secondoftwo}%
\providecommand \translation [1]{[#1]}%
\providecommand \BibitemOpen [0]{}%
\providecommand \bibitemStop [0]{}%
\providecommand \bibitemNoStop [0]{.\EOS\space}%
\providecommand \EOS [0]{\spacefactor3000\relax}%
\providecommand \BibitemShut  [1]{\csname bibitem#1\endcsname}%
\let\auto@bib@innerbib\@empty
%</preamble>
\bibitem [{\citenamefont {Einstein}\ \emph {et~al.}(1935)\citenamefont
  {Einstein}, \citenamefont {Podolsky},\ and\ \citenamefont
  {Rosen}}]{Einstein1935a}%
  \BibitemOpen
  \bibfield  {author} {\bibinfo {author} {\bibfnamefont {A.}~\bibnamefont
  {Einstein}}, \bibinfo {author} {\bibfnamefont {B.}~\bibnamefont {Podolsky}},
  \ and\ \bibinfo {author} {\bibfnamefont {N.}~\bibnamefont {Rosen}},\
  }\bibfield  {title} {\enquote {\bibinfo {title} {Can quantum-mechanical
  description of physical reality be considered complete},}\ }\href@noop {}
  {\bibfield  {journal} {\bibinfo  {journal} {Phys. Rev.}\ }\textbf {\bibinfo
  {volume} {47}},\ \bibinfo {pages} {777} (\bibinfo {year} {1935})}\BibitemShut
  {NoStop}%
\bibitem [{\citenamefont {Schr\"{o}dinger}(1935)}]{Schrodinger1935a}%
  \BibitemOpen
  \bibfield  {author} {\bibinfo {author} {\bibfnamefont {E.}~\bibnamefont
  {Schr\"{o}dinger}},\ }\bibfield  {title} {\enquote {\bibinfo {title}
  {Discussion of probability relations between separated systems},}\
  }\href@noop {} {\bibfield  {journal} {\bibinfo  {journal} {Proc. Cambridge
  Philos. Soc.}\ }\textbf {\bibinfo {volume} {31}},\ \bibinfo {pages} {555}
  (\bibinfo {year} {1935})}\BibitemShut {NoStop}%
\bibitem [{\citenamefont {Bell}(1964)}]{Bell1964a}%
  \BibitemOpen
  \bibfield  {author} {\bibinfo {author} {\bibfnamefont {J.~S.}\ \bibnamefont
  {Bell}},\ }\bibfield  {title} {\enquote {\bibinfo {title} {On the
  {Einstein-Podolsky-Rosen} paradox},}\ }\href@noop {} {\bibfield  {journal}
  {\bibinfo  {journal} {Physics}\ }\textbf {\bibinfo {volume} {1}},\ \bibinfo
  {pages} {195} (\bibinfo {year} {1964})}\BibitemShut {NoStop}%
\bibitem [{\citenamefont {Werner}(1989)}]{Werner1989a}%
  \BibitemOpen
  \bibfield  {author} {\bibinfo {author} {\bibfnamefont {R.~F.}\ \bibnamefont
  {Werner}},\ }\bibfield  {title} {\enquote {\bibinfo {title} {Quantum states
  with {Einstein-Podolsky-Rosen} correlations admitting a hidden-variable
  model},}\ }\href@noop {} {\bibfield  {journal} {\bibinfo  {journal} {Phys.
  Rev. A}\ }\textbf {\bibinfo {volume} {40}},\ \bibinfo {pages} {4277}
  (\bibinfo {year} {1989})}\BibitemShut {NoStop}%
\bibitem [{\citenamefont {Wiseman}\ \emph {et~al.}(2007)\citenamefont
  {Wiseman}, \citenamefont {Jones},\ and\ \citenamefont
  {Doherty}}]{Wiseman2007a}%
  \BibitemOpen
  \bibfield  {author} {\bibinfo {author} {\bibfnamefont {H.~M.}\ \bibnamefont
  {Wiseman}}, \bibinfo {author} {\bibfnamefont {S.~J.}\ \bibnamefont {Jones}},
  \ and\ \bibinfo {author} {\bibfnamefont {A.~C.}\ \bibnamefont {Doherty}},\
  }\bibfield  {title} {\enquote {\bibinfo {title} {Steering, entanglement,
  nonlocality, and the {Einstein-Podolsky-Rosen} paradox},}\ }\href@noop {}
  {\bibfield  {journal} {\bibinfo  {journal} {Phys. Rev. Lett.}\ }\textbf
  {\bibinfo {volume} {98}},\ \bibinfo {pages} {140402} (\bibinfo {year}
  {2007})}\BibitemShut {NoStop}%
\bibitem [{\citenamefont {Jones}\ \emph {et~al.}(2007)\citenamefont {Jones},
  \citenamefont {Wiseman},\ and\ \citenamefont {Doherty}}]{Jones2007b}%
  \BibitemOpen
  \bibfield  {author} {\bibinfo {author} {\bibfnamefont {S.~J.}\ \bibnamefont
  {Jones}}, \bibinfo {author} {\bibfnamefont {H.~M.}\ \bibnamefont {Wiseman}},
  \ and\ \bibinfo {author} {\bibfnamefont {A.~C.}\ \bibnamefont {Doherty}},\
  }\bibfield  {title} {\enquote {\bibinfo {title} {Entanglement,
  {Einstein-Podolsky-Rosen} correlations, bell nonlocality, and steering},}\
  }\href@noop {} {\bibfield  {journal} {\bibinfo  {journal} {Phys. Rev. A}\
  }\textbf {\bibinfo {volume} {76}},\ \bibinfo {pages} {052116} (\bibinfo
  {year} {2007})}\BibitemShut {NoStop}%
\bibitem [{\citenamefont {Popescu}\ and\ \citenamefont
  {Rohrlich}(1994)}]{Popescu1994a}%
  \BibitemOpen
  \bibfield  {author} {\bibinfo {author} {\bibfnamefont {S.}~\bibnamefont
  {Popescu}}\ and\ \bibinfo {author} {\bibfnamefont {D.}~\bibnamefont
  {Rohrlich}},\ }\bibfield  {title} {\enquote {\bibinfo {title} {Causality and
  nonlocality as axioms for quantum mechanics},}\ }\href@noop {} {\bibfield
  {journal} {\bibinfo  {journal} {Found. Phys.}\ }\textbf {\bibinfo {volume}
  {24}},\ \bibinfo {pages} {379} (\bibinfo {year} {1994})}\BibitemShut
  {NoStop}%
\bibitem [{\citenamefont {Paw{\l}owski}\ \emph {et~al.}(2009)\citenamefont
  {Paw{\l}owski}, \citenamefont {Paterek}, \citenamefont {Kaszlikowski},
  \citenamefont {Scarani}, \citenamefont {Winter},\ and\ \citenamefont
  {\.{Z}ukowski}}]{Pawlowski2009a}%
  \BibitemOpen
  \bibfield  {author} {\bibinfo {author} {\bibfnamefont {M.}~\bibnamefont
  {Paw{\l}owski}}, \bibinfo {author} {\bibfnamefont {T.}~\bibnamefont
  {Paterek}}, \bibinfo {author} {\bibfnamefont {D.}~\bibnamefont
  {Kaszlikowski}}, \bibinfo {author} {\bibfnamefont {V.}~\bibnamefont
  {Scarani}}, \bibinfo {author} {\bibfnamefont {A.}~\bibnamefont {Winter}}, \
  and\ \bibinfo {author} {\bibfnamefont {M.}~\bibnamefont {\.{Z}ukowski}},\
  }\bibfield  {title} {\enquote {\bibinfo {title} {Information causality as a
  physical principle},}\ }\href@noop {} {\bibfield  {journal} {\bibinfo
  {journal} {Nature}\ }\textbf {\bibinfo {volume} {461}},\ \bibinfo {pages}
  {1101} (\bibinfo {year} {2009})}\BibitemShut {NoStop}%
\bibitem [{\citenamefont {Cavalcanti}\ \emph {et~al.}(2009)\citenamefont
  {Cavalcanti}, \citenamefont {Jones}, \citenamefont {Wiseman},\ and\
  \citenamefont {Reid}}]{Cavalcanti2009a}%
  \BibitemOpen
  \bibfield  {author} {\bibinfo {author} {\bibfnamefont {E.~G.}\ \bibnamefont
  {Cavalcanti}}, \bibinfo {author} {\bibfnamefont {S.~J.}\ \bibnamefont
  {Jones}}, \bibinfo {author} {\bibfnamefont {H.~M.}\ \bibnamefont {Wiseman}},
  \ and\ \bibinfo {author} {\bibfnamefont {M.~D.}\ \bibnamefont {Reid}},\
  }\bibfield  {title} {\enquote {\bibinfo {title} {Experimental criteria for
  steering and the {Einstein-Podolsky-Rosen} paradox},}\ }\href@noop {}
  {\bibfield  {journal} {\bibinfo  {journal} {Phys. Rev. A}\ }\textbf {\bibinfo
  {volume} {80}},\ \bibinfo {pages} {032112} (\bibinfo {year}
  {2009})}\BibitemShut {NoStop}%
\bibitem [{\citenamefont {Saunders}\ \emph {et~al.}(2011)\citenamefont
  {Saunders}, \citenamefont {Jones}, \citenamefont {Wiseman},\ and\
  \citenamefont {Pryde}}]{Saunders2011a}%
  \BibitemOpen
  \bibfield  {author} {\bibinfo {author} {\bibfnamefont {D.~J.}\ \bibnamefont
  {Saunders}}, \bibinfo {author} {\bibfnamefont {S.~J.}\ \bibnamefont {Jones}},
  \bibinfo {author} {\bibfnamefont {H.~M.}\ \bibnamefont {Wiseman}}, \ and\
  \bibinfo {author} {\bibfnamefont {G.~J.}\ \bibnamefont {Pryde}},\ }\bibfield
  {title} {\enquote {\bibinfo {title} {Experimental {EPR}-steering using
  {Bell-local} states},}\ }\href@noop {} {\bibfield  {journal} {\bibinfo
  {journal} {Nat. Phys.}\ }\textbf {\bibinfo {volume} {6}},\ \bibinfo {pages}
  {845} (\bibinfo {year} {2011})}\BibitemShut {NoStop}%
\bibitem [{\citenamefont {\.{Z}ukowski}\ \emph {et~al.}(2015)\citenamefont
  {\.{Z}ukowski}, \citenamefont {Dutta},\ and\ \citenamefont
  {Yin}}]{Zukowski2015a}%
  \BibitemOpen
  \bibfield  {author} {\bibinfo {author} {\bibfnamefont {M.}~\bibnamefont
  {\.{Z}ukowski}}, \bibinfo {author} {\bibfnamefont {A.}~\bibnamefont {Dutta}},
  \ and\ \bibinfo {author} {\bibfnamefont {Z.}~\bibnamefont {Yin}},\ }\bibfield
   {title} {\enquote {\bibinfo {title} {Geometric {Bell-like} inequalities for
  steering},}\ }\href@noop {} {\bibfield  {journal} {\bibinfo  {journal} {Phys.
  Rev. A}\ }\textbf {\bibinfo {volume} {91}},\ \bibinfo {pages} {032107}
  (\bibinfo {year} {2015})}\BibitemShut {NoStop}%
\bibitem [{\citenamefont {Marciniak}\ \emph {et~al.}(2015)\citenamefont
  {Marciniak}, \citenamefont {Rutkowski}, \citenamefont {Yin}, \citenamefont
  {Horodecki},\ and\ \citenamefont {Horodecki}}]{Marciniak2015a}%
  \BibitemOpen
  \bibfield  {author} {\bibinfo {author} {\bibfnamefont {M.}~\bibnamefont
  {Marciniak}}, \bibinfo {author} {\bibfnamefont {A.}~\bibnamefont
  {Rutkowski}}, \bibinfo {author} {\bibfnamefont {Z.}~\bibnamefont {Yin}},
  \bibinfo {author} {\bibfnamefont {M.}~\bibnamefont {Horodecki}}, \ and\
  \bibinfo {author} {\bibfnamefont {R.}~\bibnamefont {Horodecki}},\ }\bibfield
  {title} {\enquote {\bibinfo {title} {Unbounded violation of quantum steering
  inequalities},}\ }\href@noop {} {\bibfield  {journal} {\bibinfo  {journal}
  {Phys. Rev. Lett.}\ }\textbf {\bibinfo {volume} {115}},\ \bibinfo {pages}
  {170401} (\bibinfo {year} {2015})}\BibitemShut {NoStop}%
\bibitem [{\citenamefont {Kogias}\ \emph {et~al.}(2015)\citenamefont {Kogias},
  \citenamefont {Skrzypczyk}, \citenamefont {Cavalcanti}, \citenamefont
  {Ac\'{i}n},\ and\ \citenamefont {Adesso}}]{Kogia2015a}%
  \BibitemOpen
  \bibfield  {author} {\bibinfo {author} {\bibfnamefont {I.}~\bibnamefont
  {Kogias}}, \bibinfo {author} {\bibfnamefont {P.}~\bibnamefont {Skrzypczyk}},
  \bibinfo {author} {\bibfnamefont {D.}~\bibnamefont {Cavalcanti}}, \bibinfo
  {author} {\bibfnamefont {A.}~\bibnamefont {Ac\'{i}n}}, \ and\ \bibinfo
  {author} {\bibfnamefont {G.}~\bibnamefont {Adesso}},\ }\bibfield  {title}
  {\enquote {\bibinfo {title} {Hierarchy of steering criteria based on moments
  for all bipartite quantum systems},}\ }\href@noop {} {\bibfield  {journal}
  {\bibinfo  {journal} {Phys. Rev. Lett.}\ }\textbf {\bibinfo {volume} {115}},\
  \bibinfo {pages} {210401} (\bibinfo {year} {2015})}\BibitemShut {NoStop}%
\bibitem [{\citenamefont {Zhu}\ \emph {et~al.}(2016)\citenamefont {Zhu},
  \citenamefont {Hayashi},\ and\ \citenamefont {Chen}}]{Zhu2015a}%
  \BibitemOpen
  \bibfield  {author} {\bibinfo {author} {\bibfnamefont {H.}~\bibnamefont
  {Zhu}}, \bibinfo {author} {\bibfnamefont {M.}~\bibnamefont {Hayashi}}, \ and\
  \bibinfo {author} {\bibfnamefont {L.}~\bibnamefont {Chen}},\ }\bibfield
  {title} {\enquote {\bibinfo {title} {Universal steering inequalities},}\
  }\href@noop {} {\bibfield  {journal} {\bibinfo  {journal} {Phys. Rev. Lett.}\
  }\textbf {\bibinfo {volume} {116}},\ \bibinfo {pages} {070403} (\bibinfo
  {year} {2016})}\BibitemShut {NoStop}%
\bibitem [{\citenamefont {Uola}\ \emph {et~al.}(2014)\citenamefont {Uola},
  \citenamefont {Moroder},\ and\ \citenamefont {G{\"{u}}hne}}]{Uola2014a}%
  \BibitemOpen
  \bibfield  {author} {\bibinfo {author} {\bibfnamefont {R.}~\bibnamefont
  {Uola}}, \bibinfo {author} {\bibfnamefont {T.}~\bibnamefont {Moroder}}, \
  and\ \bibinfo {author} {\bibfnamefont {O.}~\bibnamefont {G{\"{u}}hne}},\
  }\bibfield  {title} {\enquote {\bibinfo {title} {Joint measurability of
  generalized measurements implies classicality},}\ }\href@noop {} {\bibfield
  {journal} {\bibinfo  {journal} {Phys. Rev. Lett.}\ }\textbf {\bibinfo
  {volume} {113}},\ \bibinfo {pages} {160403} (\bibinfo {year}
  {2014})}\BibitemShut {NoStop}%
\bibitem [{\citenamefont {Quintino}\ \emph {et~al.}(2014)\citenamefont
  {Quintino}, \citenamefont {V\'{e}rtesi},\ and\ \citenamefont
  {Brunner}}]{Quintino2014a}%
  \BibitemOpen
  \bibfield  {author} {\bibinfo {author} {\bibfnamefont {M.~T.}\ \bibnamefont
  {Quintino}}, \bibinfo {author} {\bibfnamefont {T.}~\bibnamefont
  {V\'{e}rtesi}}, \ and\ \bibinfo {author} {\bibfnamefont {N.}~\bibnamefont
  {Brunner}},\ }\bibfield  {title} {\enquote {\bibinfo {title} {Joint
  measurability, {Einstein-Podolsky-Rosen} steering, and {Bell} nonlocality},}\
  }\href@noop {} {\bibfield  {journal} {\bibinfo  {journal} {Phys. Rev. Lett.}\
  }\textbf {\bibinfo {volume} {113}},\ \bibinfo {pages} {160402} (\bibinfo
  {year} {2014})}\BibitemShut {NoStop}%
\bibitem [{\citenamefont {Chen}\ \emph {et~al.}(2014)\citenamefont {Chen},
  \citenamefont {Li}, \citenamefont {Lambert}, \citenamefont {Chen},
  \citenamefont {Ota}, \citenamefont {Chen},\ and\ \citenamefont
  {Nori}}]{Chen2014a}%
  \BibitemOpen
  \bibfield  {author} {\bibinfo {author} {\bibfnamefont {Y.-N.}\ \bibnamefont
  {Chen}}, \bibinfo {author} {\bibfnamefont {C.-M.}\ \bibnamefont {Li}},
  \bibinfo {author} {\bibfnamefont {N.}~\bibnamefont {Lambert}}, \bibinfo
  {author} {\bibfnamefont {S.-L.}\ \bibnamefont {Chen}}, \bibinfo {author}
  {\bibfnamefont {Y.}~\bibnamefont {Ota}}, \bibinfo {author} {\bibfnamefont
  {G.-Y.}\ \bibnamefont {Chen}}, \ and\ \bibinfo {author} {\bibfnamefont
  {F.}~\bibnamefont {Nori}},\ }\bibfield  {title} {\enquote {\bibinfo {title}
  {Temporal steering inequality},}\ }\href@noop {} {\bibfield  {journal}
  {\bibinfo  {journal} {Phys. Rev. A}\ }\textbf {\bibinfo {volume} {89}},\
  \bibinfo {pages} {032112} (\bibinfo {year} {2014})}\BibitemShut {NoStop}%
\bibitem [{\citenamefont {Piani}\ and\ \citenamefont
  {Watrous}(2015)}]{Piani2015a}%
  \BibitemOpen
  \bibfield  {author} {\bibinfo {author} {\bibfnamefont {M.}~\bibnamefont
  {Piani}}\ and\ \bibinfo {author} {\bibfnamefont {J.}~\bibnamefont
  {Watrous}},\ }\bibfield  {title} {\enquote {\bibinfo {title} {Necessary and
  sufficient quantum information characterization of {Einstein-Podolsky-Rosen}
  steering},}\ }\href@noop {} {\bibfield  {journal} {\bibinfo  {journal} {Phys.
  Rev. Lett.}\ }\textbf {\bibinfo {volume} {114}},\ \bibinfo {pages} {060404}
  (\bibinfo {year} {2015})}\BibitemShut {NoStop}%
\bibitem [{\citenamefont {Skrzypczyk}\ \emph {et~al.}(2014)\citenamefont
  {Skrzypczyk}, \citenamefont {Navascu\'{e}s},\ and\ \citenamefont
  {Cavalcanti}}]{Skrzypczyk2014a}%
  \BibitemOpen
  \bibfield  {author} {\bibinfo {author} {\bibfnamefont {P.}~\bibnamefont
  {Skrzypczyk}}, \bibinfo {author} {\bibfnamefont {M.}~\bibnamefont
  {Navascu\'{e}s}}, \ and\ \bibinfo {author} {\bibfnamefont {D.}~\bibnamefont
  {Cavalcanti}},\ }\bibfield  {title} {\enquote {\bibinfo {title} {Quantifying
  {Einstein-Podonsky-Rosen} steering},}\ }\href@noop {} {\bibfield  {journal}
  {\bibinfo  {journal} {Phys. Rev. Lett.}\ }\textbf {\bibinfo {volume} {112}},\
  \bibinfo {pages} {180404} (\bibinfo {year} {2014})}\BibitemShut {NoStop}%
\bibitem [{\citenamefont {Kadison}\ and\ \citenamefont
  {Ringrose}(1983)}]{Kadison1990a}%
  \BibitemOpen
  \bibfield  {author} {\bibinfo {author} {\bibfnamefont {R.~V.}\ \bibnamefont
  {Kadison}}\ and\ \bibinfo {author} {\bibfnamefont {J.~R.}\ \bibnamefont
  {Ringrose}},\ }\href@noop {} {\emph {\bibinfo {title} {Fundamentals of the
  theory of operator algebras}}}\ (\bibinfo  {publisher} {American Mathematical
  Society},\ \bibinfo {year} {1983})\BibitemShut {NoStop}%
\bibitem [{\citenamefont {Nielsen}\ and\ \citenamefont
  {Chuang}(2010)}]{Nielsen2010a}%
  \BibitemOpen
  \bibfield  {author} {\bibinfo {author} {\bibfnamefont {M.~A.}\ \bibnamefont
  {Nielsen}}\ and\ \bibinfo {author} {\bibfnamefont {I.~L.}\ \bibnamefont
  {Chuang}},\ }\href@noop {} {\emph {\bibinfo {title} {Quantum computation and
  quantum information}}}\ (\bibinfo  {publisher} {Cambridge University Press},\
  \bibinfo {year} {2010})\BibitemShut {NoStop}%
\bibitem [{\citenamefont {Jevtic}\ \emph {et~al.}(2014)\citenamefont {Jevtic},
  \citenamefont {Pusey}, \citenamefont {Jennings},\ and\ \citenamefont
  {Rudolph}}]{Jevtic2014a}%
  \BibitemOpen
  \bibfield  {author} {\bibinfo {author} {\bibfnamefont {S.}~\bibnamefont
  {Jevtic}}, \bibinfo {author} {\bibfnamefont {M.}~\bibnamefont {Pusey}},
  \bibinfo {author} {\bibfnamefont {D.}~\bibnamefont {Jennings}}, \ and\
  \bibinfo {author} {\bibfnamefont {T.}~\bibnamefont {Rudolph}},\ }\bibfield
  {title} {\enquote {\bibinfo {title} {Quantum steering ellipsoids},}\
  }\href@noop {} {\bibfield  {journal} {\bibinfo  {journal} {Phys. Rev. Lett.}\
  }\textbf {\bibinfo {volume} {113}},\ \bibinfo {pages} {020402} (\bibinfo
  {year} {2014})}\BibitemShut {NoStop}%
\bibitem [{\citenamefont {Bengtsson}\ and\ \citenamefont
  {\'{Z}yczkowski}(2006)}]{Bengtsson2006a}%
  \BibitemOpen
  \bibfield  {author} {\bibinfo {author} {\bibfnamefont {I.}~\bibnamefont
  {Bengtsson}}\ and\ \bibinfo {author} {\bibfnamefont {K.}~\bibnamefont
  {\'{Z}yczkowski}},\ }\href@noop {} {\emph {\bibinfo {title} {Geometry of
  quantum states: an introduction to quantum entanglement}}}\ (\bibinfo
  {publisher} {Cambridge University Press},\ \bibinfo {year}
  {2006})\BibitemShut {NoStop}%
\bibitem [{\citenamefont {Moroder}\ \emph {et~al.}(2016)\citenamefont
  {Moroder}, \citenamefont {Gittsovich}, \citenamefont {Huber}, \citenamefont
  {Uola},\ and\ \citenamefont {G{\"{u}}hne}}]{Moroder2014a}%
  \BibitemOpen
  \bibfield  {author} {\bibinfo {author} {\bibfnamefont {T.}~\bibnamefont
  {Moroder}}, \bibinfo {author} {\bibfnamefont {O.}~\bibnamefont {Gittsovich}},
  \bibinfo {author} {\bibfnamefont {M.}~\bibnamefont {Huber}}, \bibinfo
  {author} {\bibfnamefont {R.}~\bibnamefont {Uola}}, \ and\ \bibinfo {author}
  {\bibfnamefont {O.}~\bibnamefont {G{\"{u}}hne}},\ }\bibfield  {title}
  {\enquote {\bibinfo {title} {Steering maps and their application to
  dimensional-bounded steering},}\ }\href@noop {} {\bibfield  {journal}
  {\bibinfo  {journal} {Phys. Rev. Lett.}\ }\textbf {\bibinfo {volume} {116}},\
  \bibinfo {pages} {090403} (\bibinfo {year} {2016})}\BibitemShut {NoStop}%
\bibitem [{\citenamefont {de~Pillis}(1967)}]{Pillis1967a}%
  \BibitemOpen
  \bibfield  {author} {\bibinfo {author} {\bibfnamefont {J.}~\bibnamefont
  {de~Pillis}},\ }\bibfield  {title} {\enquote {\bibinfo {title} {Linear
  transformations which preserve trace and positive semidefiniteness of
  operators},}\ }\href@noop {} {\bibfield  {journal} {\bibinfo  {journal}
  {Pacific J. Math.}\ }\textbf {\bibinfo {volume} {23}},\ \bibinfo {pages}
  {129} (\bibinfo {year} {1967})}\BibitemShut {NoStop}%
\bibitem [{\citenamefont {Jamio{\l}kowski}(1972)}]{Jamiokowski1972a}%
  \BibitemOpen
  \bibfield  {author} {\bibinfo {author} {\bibfnamefont {A.}~\bibnamefont
  {Jamio{\l}kowski}},\ }\bibfield  {title} {\enquote {\bibinfo {title} {Linear
  transformations which preserve trace and positive semidefiniteness of
  operators},}\ }\href@noop {} {\bibfield  {journal} {\bibinfo  {journal} {Rep.
  Math. Phys.}\ }\textbf {\bibinfo {volume} {3}},\ \bibinfo {pages} {275}
  (\bibinfo {year} {1972})}\BibitemShut {NoStop}%
\bibitem [{\citenamefont {Jiang}\ \emph {et~al.}(2013)\citenamefont {Jiang},
  \citenamefont {Luo},\ and\ \citenamefont {Fu}}]{Jiang2013a}%
  \BibitemOpen
  \bibfield  {author} {\bibinfo {author} {\bibfnamefont {M.}~\bibnamefont
  {Jiang}}, \bibinfo {author} {\bibfnamefont {S.}~\bibnamefont {Luo}}, \ and\
  \bibinfo {author} {\bibfnamefont {S.}~\bibnamefont {Fu}},\ }\bibfield
  {title} {\enquote {\bibinfo {title} {Channel-state duality},}\ }\href@noop {}
  {\bibfield  {journal} {\bibinfo  {journal} {Phys. Rev. A}\ }\textbf {\bibinfo
  {volume} {87}},\ \bibinfo {pages} {022310} (\bibinfo {year}
  {2013})}\BibitemShut {NoStop}%
\bibitem [{\citenamefont {Kadison}\ and\ \citenamefont
  {Kromann}(1995)}]{Kadison1995a}%
  \BibitemOpen
  \bibfield  {author} {\bibinfo {author} {\bibfnamefont {L.}~\bibnamefont
  {Kadison}}\ and\ \bibinfo {author} {\bibfnamefont {M.~T.}\ \bibnamefont
  {Kromann}},\ }\href@noop {} {\emph {\bibinfo {title} {Projective geometry and
  modern algebra}}}\ (\bibinfo  {publisher} {Birkh{\"a}user},\ \bibinfo {year}
  {1995})\BibitemShut {NoStop}%
\bibitem [{\citenamefont {Milne}\ \emph {et~al.}(2015)\citenamefont {Milne},
  \citenamefont {Jennings},\ and\ \citenamefont {Rudolph}}]{Milne2015a}%
  \BibitemOpen
  \bibfield  {author} {\bibinfo {author} {\bibfnamefont {A.}~\bibnamefont
  {Milne}}, \bibinfo {author} {\bibfnamefont {D.}~\bibnamefont {Jennings}}, \
  and\ \bibinfo {author} {\bibfnamefont {T.}~\bibnamefont {Rudolph}},\
  }\bibfield  {title} {\enquote {\bibinfo {title} {Geometric representation of
  two-qubit entanglement witnesses},}\ }\href@noop {} {\bibfield  {journal}
  {\bibinfo  {journal} {Phys. Rev. A}\ }\textbf {\bibinfo {volume} {92}},\
  \bibinfo {pages} {012311} (\bibinfo {year} {2015})}\BibitemShut {NoStop}%
\bibitem [{\citenamefont {Jevtic}\ \emph {et~al.}(2015)\citenamefont {Jevtic},
  \citenamefont {Hall}, \citenamefont {Anderson}, \citenamefont {Zwierz},\ and\
  \citenamefont {Wiseman}}]{Jevtic2015a}%
  \BibitemOpen
  \bibfield  {author} {\bibinfo {author} {\bibfnamefont {S.}~\bibnamefont
  {Jevtic}}, \bibinfo {author} {\bibfnamefont {M.~J.~W.}\ \bibnamefont {Hall}},
  \bibinfo {author} {\bibfnamefont {M.~R.}\ \bibnamefont {Anderson}}, \bibinfo
  {author} {\bibfnamefont {M.}~\bibnamefont {Zwierz}}, \ and\ \bibinfo {author}
  {\bibfnamefont {H.~M.}\ \bibnamefont {Wiseman}},\ }\bibfield  {title}
  {\enquote {\bibinfo {title} {{Einstein-Podolsky-Rosen} steering and the
  steering ellipsoid},}\ }\href@noop {} {\bibfield  {journal} {\bibinfo
  {journal} {J. Opt. Soc. Am. B}\ }\textbf {\bibinfo {volume} {32}},\ \bibinfo
  {pages} {A40} (\bibinfo {year} {2015})}\BibitemShut {NoStop}%
\bibitem [{\citenamefont {Rockafellar}(1970)}]{Rockafellar1970a}%
  \BibitemOpen
  \bibfield  {author} {\bibinfo {author} {\bibfnamefont {R.~T.}\ \bibnamefont
  {Rockafellar}},\ }\href@noop {} {\emph {\bibinfo {title} {Convex analysis}}}\
  (\bibinfo  {publisher} {Princeton University Press},\ \bibinfo {year}
  {1970})\BibitemShut {NoStop}%
\bibitem [{\citenamefont {Sanpera}\ \emph {et~al.}(1998)\citenamefont
  {Sanpera}, \citenamefont {Tarrach},\ and\ \citenamefont
  {Vidal}}]{Sanpera1998a}%
  \BibitemOpen
  \bibfield  {author} {\bibinfo {author} {\bibfnamefont {A.}~\bibnamefont
  {Sanpera}}, \bibinfo {author} {\bibfnamefont {R.}~\bibnamefont {Tarrach}}, \
  and\ \bibinfo {author} {\bibfnamefont {G.}~\bibnamefont {Vidal}},\ }\bibfield
   {title} {\enquote {\bibinfo {title} {Local description of quantum
  inseparability},}\ }\href@noop {} {\bibfield  {journal} {\bibinfo  {journal}
  {Phys. Rev. A}\ }\textbf {\bibinfo {volume} {58}},\ \bibinfo {pages} {826}
  (\bibinfo {year} {1998})}\BibitemShut {NoStop}%
\bibitem [{\citenamefont {Werner}(2014)}]{Werner2014a}%
  \BibitemOpen
  \bibfield  {author} {\bibinfo {author} {\bibfnamefont {R.~F.}\ \bibnamefont
  {Werner}},\ }\bibfield  {title} {\enquote {\bibinfo {title} {Steering, or
  maybe why {Einstein} did not go all the way to {Bell's} argument},}\
  }\href@noop {} {\bibfield  {journal} {\bibinfo  {journal} {J. Phys. A: Math.
  Theor.}\ }\textbf {\bibinfo {volume} {47}},\ \bibinfo {pages} {424008}
  (\bibinfo {year} {2014})}\BibitemShut {NoStop}%
\bibitem [{\citenamefont {Bowles}\ \emph {et~al.}(2016)\citenamefont {Bowles},
  \citenamefont {Hirsch}, \citenamefont {Quintino},\ and\ \citenamefont
  {Brunner}}]{Bowles2016a}%
  \BibitemOpen
  \bibfield  {author} {\bibinfo {author} {\bibfnamefont {J.}~\bibnamefont
  {Bowles}}, \bibinfo {author} {\bibfnamefont {F.}~\bibnamefont {Hirsch}},
  \bibinfo {author} {\bibfnamefont {M.~T.}\ \bibnamefont {Quintino}}, \ and\
  \bibinfo {author} {\bibfnamefont {N.}~\bibnamefont {Brunner}},\ }\bibfield
  {title} {\enquote {\bibinfo {title} {Sufficient criterion for guaranteeing
  that a two-qubit state is unsteerable},}\ }\href@noop {} {\bibfield
  {journal} {\bibinfo  {journal} {Phys. Rev. A}\ }\textbf {\bibinfo {volume}
  {93}},\ \bibinfo {pages} {022121} (\bibinfo {year} {2016})}\BibitemShut
  {NoStop}%
\bibitem [{\citenamefont {Nguyen}\ and\ \citenamefont
  {Vu}(2016)}]{Nguyen2016b}%
  \BibitemOpen
  \bibfield  {author} {\bibinfo {author} {\bibfnamefont {H.~C.}\ \bibnamefont
  {Nguyen}}\ and\ \bibinfo {author} {\bibfnamefont {T.}~\bibnamefont {Vu}},\
  }\bibfield  {title} {\enquote {\bibinfo {title} {Necessary and sufficient
  condition for steerability of two-qubit states by the geometry of steering
  outcomes},}\ }\href@noop {} {\bibfield  {journal} {\bibinfo  {journal}
  {arXiv:1604.03815}\ } (\bibinfo {year} {2016})}\BibitemShut {NoStop}%
\end{thebibliography}%
\end{document}